\documentclass[11pt]{article}
\usepackage{times}

\usepackage{subcaption}
\usepackage{caption}
\usepackage[paperwidth=199.8mm,paperheight=297mm,centering,hmargin=25mm,vmargin=25mm]{geometry}
\usepackage[dvipsnames]{xcolor}
\usepackage{framed}
\definecolor{shadecolor}{rgb}{0.9,0.9,0.9}
\usepackage{mathtools}
\usepackage{amsmath}
\usepackage[shortlabels]{enumitem}
\usepackage[titletoc,title]{appendix}
\usepackage{graphicx,epic,eepic,epsfig,amsmath,latexsym,amssymb,verbatim,color}
\usepackage{dsfont}

\usepackage{float}
\usepackage{tikz}
\usepackage{hyperref}
\hypersetup{colorlinks=true,citecolor=blue,linkcolor=blue,filecolor=blue,urlcolor=blue,breaklinks=true}

\usepackage{tcolorbox}
\usepackage{relsize}
\usepackage{graphicx}

\usepackage[marginal]{footmisc}
\usepackage{url}

\usepackage{theorem}
\newtheorem{definition}{Definition}
\newtheorem{proposition}[definition]{Proposition}
\newtheorem{lemma}[definition]{Lemma}

\newtheorem{theorem}[definition]{Theorem}
\newtheorem{corollary}[definition]{Corollary}

\usepackage{colortbl}
\usepackage{pifont}
\newcommand{\cmark}{\ding{52}}%

\newcommand{\hcmark}{\cmark\kern-1.1ex\raisebox{.7ex}{\rotatebox[origin=c]{125}{\textbf{--}}}}%
\definecolor{Gray}{gray}{0.92}
\definecolor{Gray2}{gray}{0.75}
\definecolor{maroon}{cmyk}{0,0.87,0.68,0.32}
\usepackage{booktabs} 

\def\squareforqed{\hbox{\rlap{$\sqcap$}$\sqcup$}}
\def\qed{\ifmmode\squareforqed\else{\unskip\nobreak\hfil
\penalty50\hskip1em\null\nobreak\hfil\squareforqed
\parfillskip=0pt\finalhyphendemerits=0\endgraf}\fi}
\def\endenv{\ifmmode\;\else{\unskip\nobreak\hfil
\penalty50\hskip1em\null\nobreak\hfil\;
\parfillskip=0pt\finalhyphendemerits=0\endgraf}\fi}
\newenvironment{proof}{\noindent \textbf{{Proof~} }}{\hfill $\blacksquare$}

\newcounter{remark}
\newenvironment{remark}[1][]{\refstepcounter{remark}\par\medskip\noindent%
\textbf{Remark~\theremark #1} }{\medskip}

\newcounter{example}

\mathchardef\ordinarycolon\mathcode`\:
\mathcode`\:=\string"8000
\def\vcentcolon{\mathrel{\mathop\ordinarycolon}}
\begingroup \catcode`\:=\active
  \lowercase{\endgroup
  \let :\vcentcolon
  }

\newcommand{\nc}{\newcommand}
\nc{\rnc}{\renewcommand}
\nc{\bra}[1]{\langle#1|}
\nc{\ket}[1]{|#1\rangle}
\nc{\<}{\langle}
\rnc{\>}{\rangle}
\nc{\ketbra}[2]{|#1\rangle\!\langle#2|}
\nc{\braket}[2]{\langle#1|#2\rangle}
\nc{\braandket}[3]{\langle #1|#2|#3\rangle}
\nc{\proj}[1]{| #1\rangle\!\langle #1 |}
\nc{\avg}[1]{\langle#1\rangle}
\nc{\Rank}{\operatorname{Rank}}
\nc{\smfrac}[2]{\mbox{$\frac{#1}{#2}$}}
\nc{\tr}{\operatorname{Tr}}
\nc{\ox}{\otimes}

\nc{\cA}{{\cal A}}
\nc{\cB}{{\cal B}}
\nc{\cC}{{\cal C}}
\nc{\cD}{{\cal D}}
\nc{\cE}{{\cal E}}
\nc{\cF}{{\cal F}}
\nc{\cG}{{\cal G}}
\nc{\cH}{{\cal H}}
\nc{\cI}{{\cal I}}
\nc{\cJ}{{\cal J}}
\nc{\cK}{{\cal K}}
\nc{\cL}{{\cal L}}
\nc{\cM}{{\cal M}}
\nc{\cN}{{\cal N}}
\nc{\cO}{{\cal O}}
\nc{\cP}{{\cal P}}
\nc{\cQ}{{\cal Q}}
\nc{\cR}{{\cal R}}
\nc{\cS}{{\cal S}}
\nc{\cT}{{\cal T}}
\nc{\cV}{{\cal V}}
\nc{\cX}{{\cal X}}
\nc{\cY}{{\cal Y}}
\nc{\cZ}{{\cal Z}}
\nc{\cW}{{\cal W}}

\nc{\RR}{{{\mathbb R}}}
\nc{\CC}{{{\mathbb C}}}
\nc{\FF}{{{\mathbb F}}}
\nc{\NN}{{{\mathbb N}}}
\nc{\ZZ}{{{\mathbb Z}}}
\nc{\PP}{{{\mathbb P}}}
\nc{\QQ}{{{\mathbb Q}}}
\nc{\UU}{{{\mathbb U}}}
\nc{\EE}{{{\mathbb E}}}
\nc{\id}{{\operatorname{id}}}

\nc{\1}{{\mathds{1}}}
\nc{\supp}{{\operatorname{supp}}}

\def\ve{\varepsilon}


\newcommand{\eps}{\varepsilon}
\newcommand{\cptp}{\text{\rm CPTP}}

\newcommand{\reg}{{\rm reg}}

\RequirePackage{doi}

\makeatletter
\def\grd@save@target#1{%
  \def\grd@target{#1}}
\def\grd@save@start#1{%
  \def\grd@start{#1}}
\tikzset{
  grid with coordinates/.style={
    to path={%
      \pgfextra{%
        \edef\grd@@target{(\tikztotarget)}%
        \tikz@scan@one@point\grd@save@target\grd@@target\relax
        \edef\grd@@start{(\tikztostart)}%
        \tikz@scan@one@point\grd@save@start\grd@@start\relax
        \draw[minor help lines,magenta] (\tikztostart) grid (\tikztotarget);
        \draw[major help lines] (\tikztostart) grid (\tikztotarget);
        \grd@start
        \pgfmathsetmacro{\grd@xa}{\the\pgf@x/1cm}
        \pgfmathsetmacro{\grd@ya}{\the\pgf@y/1cm}
        \grd@target
        \pgfmathsetmacro{\grd@xb}{\the\pgf@x/1cm}
        \pgfmathsetmacro{\grd@yb}{\the\pgf@y/1cm}
        \pgfmathsetmacro{\grd@xc}{\grd@xa + \pgfkeysvalueof{/tikz/grid with coordinates/major step}}
        \pgfmathsetmacro{\grd@yc}{\grd@ya + \pgfkeysvalueof{/tikz/grid with coordinates/major step}}
        \foreach \x in {\grd@xa,\grd@xc,...,\grd@xb}
        \node[anchor=north] at (\x,\grd@ya) {\pgfmathprintnumber{\x}};
        \foreach \y in {\grd@ya,\grd@yc,...,\grd@yb}
        \node[anchor=east] at (\grd@xa,\y) {\pgfmathprintnumber{\y}};
      }
    }
  },
  minor help lines/.style={
    help lines,
    step=\pgfkeysvalueof{/tikz/grid with coordinates/minor step}
  },
  major help lines/.style={
    help lines,
    line width=\pgfkeysvalueof{/tikz/grid with coordinates/major line width},
    step=\pgfkeysvalueof{/tikz/grid with coordinates/major step}
  },
  grid with coordinates/.cd,
  minor step/.initial=.2,
  major step/.initial=1,
  major line width/.initial=2pt,
}
\makeatother

\newcommand{\Renyi}{R\'{e}nyi }

\def\tD{\widetilde{D}}
\def\td{\widetilde{d}}

\def\tR{\tilde{R}}
\def\ulD{d}

\usepackage{makecell}
\usepackage{diagbox}
\usepackage{multirow}







\def\mE{\mathcal{E}}

\def\mF{\mathcal{F}}
\def\mN{\mathcal{N}}

\def\mS{\mathcal{S}}



\renewcommand{\geq}{\geqslant}
\renewcommand{\leq}{\leqslant}

\def\md{\mathfrak{D}}

\def\ml{\mathfrak{L}}

\newcommand{\bea}{\begin{eqnarray}}
\newcommand{\eea}{\end{eqnarray}}
\newcommand{\be}{\begin{equation}}
\newcommand{\ee}{\end{equation}}
\newcommand{\ba}{\begin{equation}\begin{aligned}}
\newcommand{\ea}{\end{aligned}\end{equation}}

\def\sym{{\rm Sym}}

\def\mi{\mathfrak{I}}
\def\pr{{\rm Pr}}
\def\irr{{\rm Irr}}

\def\be{\begin{equation}}
\def\ee{\end{equation}}

\newcommand{\mR}{\mathcal{R}}
\newcommand{\mM}{\mathcal{M}}

\newcommand{\lr}{\rangle\langle}

\newcommand{\ra}{\rangle}

\newcommand{\mbb}[1]{\mathbb{#1}}




\newcommand{\eqdef}{\coloneqq}

\def\tD{\widetilde{D}}

\def\tR{\tilde{R}}

\def\x{\mathbf{x}}
\def\y{\mathbf{y}}

\def\0{\mathbf{0}}

\def\bd{\bar{d}}

\def\bD{\bar{D}}





\makeatletter
\newcommand*\rel@kern[1]{\kern#1\dimexpr\macc@kerna}
\newcommand*\widebar[1]{%
  \begingroup
  \def\mathaccent##1##2{%
    \rel@kern{0.8}%
    \overline{\rel@kern{-0.8}\macc@nucleus\rel@kern{0.2}}%
    \rel@kern{-0.2}%
  }%
  \macc@depth\@ne
  \let\math@bgroup\@empty \let\math@egroup\macc@set@skewchar
  \mathsurround\z@ \frozen@everymath{\mathgroup\macc@group\relax}%
  \macc@set@skewchar\relax
  \let\mathaccentV\macc@nested@a
  \macc@nested@a\relax111{#1}%
  \endgroup
}
\makeatother

\def\bD{\widebar{D}}
\newcommand{\pro}{\text{\rm PRO}}
\newcommand{\coh}{\text{\rm COH}}
\newcommand{\seq}{\text{\rm SEQ}}

\newcommand{\ST}{\text{\rm E}^{\text{\rm st}}}
\newcommand{\SC}{\text{\rm E}^{\text{\rm sc}}}
\newcommand{\ER}{\text{\rm E}^{\text{\rm er}}}

\newcommand{\boldD}{\boldsymbol{D}}

\DeclareMathOperator*{\Motimes}{\text{\raisebox{0.25ex}{\scalebox{0.6}{$\bigotimes$}}}}

\usepackage{authblk} 

\begin{document}

\title{\Large \textbf{Towards the ultimate limits of quantum channel discrimination\\ and quantum communication}}

\author[1]{Kun Fang~\thanks{kunfang@cuhk.edu.cn}}
\affil[1]{\small School of Data Science, The Chinese University of Hong Kong, Shenzhen, Guangdong, 518172, China}

\author[2]{Gilad Gour~\thanks{giladgour@technion.ac.il}}
\affil[2]{\small Technion - Israel Institute of Technology, Faculty of Mathematics, Haifa 3200003, Israel}   

\author[3]{Xin Wang~\thanks{felixxinwang@hkust-gz.edu.cn}}
\affil[3]{Thrust of Artificial Intelligence, The Hong Kong University of Science and Technology (Guangzhou), Guangdong, 511453, China}

\date{\today}
\maketitle

\vspace{-0.5cm}
\begin{abstract}
Distinguishability is fundamental to information theory and extends naturally to quantum systems. While quantum state discrimination is well understood, quantum channel discrimination remains challenging due to the dynamic nature of channels and the variety of discrimination strategies. This work advances the understanding of quantum channel discrimination and its fundamental limits. We develop new tools for quantum divergences, including sharper bounds on the quantum hypothesis testing relative entropy and additivity results for channel divergences. We establish a quantum Stein's lemma for memoryless channel discrimination, and link the strong converse property to the asymptotic equipartition property and continuity of divergences. Notably, we prove the equivalence of exponentially strong converse properties under coherent and sequential strategies. We further explore the interplay among operational regimes, discrimination strategies, and channel divergences, deriving exponents in various settings and contributing to a unified framework for channel discrimination. Finally, we recast quantum communication tasks as discrimination problems, uncovering deep connections between channel capacities, channel discrimination, and the mathematical structure of channel divergences. These results bridge two core areas of quantum information theory and offer new insights for future exploration.
\end{abstract}

\setcounter{tocdepth}{1}
{
  \hypersetup{linkcolor=black}
  \tableofcontents
}

\newpage

\section{Introduction}

Distinguishability is a central topic in information theory from both theoretical and practical perspectives. A fundamental framework for studying distinguishability is \textit{asymmetric hypothesis testing}. In this setting, a source generates a sample $x$\ from one of two probability
distributions $p\equiv\{p(x)\}_{x\in\mathcal{X}}$ or $q\equiv\{q(x)\}_{x\in
\mathcal{X}}$. The objective of asymmetric hypothesis testing is to minimize the Type-II\ error (decides $p$ when the fact is $q$) while keeping the Type-I\ error (decides $q$ when the fact is $p$) within a certain threshold. The celebrated Chernoff-Stein’s Lemma~\cite{Chernoff1952} states that, for any constant bound on the Type-I error, the optimal Type-II error decays exponentially fast in the number of samples, and the decay rate is exactly the relative entropy (Kullback--Leibler divergence),
\begin{equation}
D(p\Vert q)=\sum_{x\in\mathcal{X}}p(x)\log_{2}[p(x)/q(x)].
\end{equation}
In particular, this lemma also states the ``strong converse property'', a desirable mathematical property in information theory~\cite{Wolfowitz1978} that delineates a sharp boundary for the tradeoff between the Type-I and Type-II errors in the asymptotic regime: any possible scheme with Type-II error decaying to zero with an exponent larger than the relative entropy will result in the Type-I error converging to one in the asymptotic limit.  
Therefore, the Chernoff-Stein’s Lemma provides a rigorous operational interpretation of the relative entropy and establishes a crucial connection between hypothesis testing and information theory~\cite{Blahut1974}.

A natural question is whether the above result generalizes to the quantum case. Substantial efforts have been made to answer this fundamental question in quantum information community~(see, e.g., \cite{hiai1991proper,ogawa2000strong,Hayashi2002,Audenaert2008b,hayashi2007error,brandao2010generalization,cooney2016strong,mosonyi2015quantum,wang2019resource,WW2019,fang2025adversarial,fang2024generalized}). The basic task is quantum state discrimination, in which we are given an independent and identically distributed (i.i.d.) quantum state, which could be either $\rho^{\ox n}$ or $\sigma^{\ox n}$. We set that $\rho^{\ox n}$ is the null hypothesis and $\sigma^{\ox n}$ is the alternative hypothesis. The goal is to perform a binary measurement $\{\Pi_n, I-\Pi_n\}$ on the state to determine which hypothesis is true. The corresponding error probabilities are defined analogously to the classical case, as follows:
\begin{align}
\text{(Type-I)}\quad \alpha_n(\Pi_n):= \tr [(I-\Pi_n) \rho^{\ox n}],\qquad \text{(Type-II)}\quad \beta_n(\Pi_n) := \tr[\Pi_n \sigma^{\ox n}].
\end{align}
The quantum version of the Chernoff-Stein's Lemma (also known as quantum Stein's lemma) states that~\cite{hiai1991proper, ogawa2000strong}
\begin{equation}
\lim_{n \rightarrow \infty}\frac{1}{n} D_H^\varepsilon\left(\rho^{\otimes n} \| \sigma^{\otimes n}\right)=D(\rho \| \sigma), \quad \forall \varepsilon \in(0,1),
\end{equation}
where $D_H^\ve(\rho\|\sigma):= -\log \inf \{\tr [\Pi \sigma]: 0 \leq \Pi \leq I, \tr[(I-\Pi)\rho] \leq \ve\}$ denotes the quantum hypothesis testing relative entropy that characterizes the decay rate of the optimal Type-II error and $D(\rho \| \sigma)= \tr[\rho(\log \rho - \log \sigma)]$ denotes the quantum relative entropy.
This quantum Stein's lemma delivers a rigorous operational interpretation for the quantum relative entropy. Extended research on quantum Stein's lemma are presented in \cite{Hayashi2002,Nagaoka2007,Audenaert2008b,hayashi2007error,brandao2010generalization,Wang2012,fang2025adversarial,fang2024generalized,fang2025efficient}.

Although research in quantum hypothesis testing has
largely focused on quantum states, there are various situations in which quantum channels play the role of the main objects of study. The task of channel discrimination is very similar to that of state discrimination. Here, we are given black-box access to $n$ uses of a channel $\cG$ with the aim to identify it from candidates $\cN$ and $\cM$. 
However, quantum channel discrimination has more diversities in terms of discrimination strategies (e.g., product strategy, coherent strategy, sequential strategy) due to its nature as dynamic resources~\cite{Chiribella2008a,Hayashi2009a,Piani2009b,cooney2016strong,Yuan2017,Berta2018,WW2019,fang2019chain,Pirandola2019}, which leads to several variants of the quantum channel Stein's lemma. 
In particular, for the coherent strategies (also known as parallel strategies in some literatures), the black box can be can be used $n$ times in parallel to any state with a reference system before performing a measurement at the final output to identify the channel. Based on the recently developed resource theory of asymmetric distinguishability for quantum channels, the state-of-the-art result~\cite{WW2019} arrives at
\begin{align}\label{eq:asymp rate of cd}
\lim_{\varepsilon\rightarrow0}\lim_{n\rightarrow\infty}\frac{1}{n}D_{H}^{\varepsilon}(\mathcal{N}^{\otimes n}\Vert\mathcal{M}^{\otimes n}) = D^{\reg}(\cN\|\cM),
\end{align}
with $D_{H}^{\varepsilon}(\mathcal{N}\Vert\mathcal{M})$ denotes the hypothesis testing relative entropy between two quantum channels and
$D^{\reg}(\cN\|\cM)$ denotes the regularized quantum relative entropy.  That is, for Type-I error bounded by $\varepsilon$, the
asymptotic optimal rate of the Type-II error exponent is
given by $D^{\reg}(\cN\|\cM)$ when $\varepsilon$ goes to $0$.

However, the condition of vanishing $\varepsilon$ lefts a notable gap to achieve the quantum channel version of Stein's lemma. 
Unlike state discrimination, the dynamic feature of quantum channels raises challenging difficulties in determining the optimal discrimination scheme as we have to handle the additional optimization of the input states and the non-i.i.d. structure of the testing states. To fill the gap, it requires a deeper understanding and analysis on the error exponent in hypothesis testing of quantum channels. The solution could promptly advance our understanding of quantum channel discrimination, quantum communication~\cite{Wang2017d,Wang2017a,Fang2018,fang2021geometric}, and the related field of quantum metrology~\cite{Pirandola2018a,Braun2018,Degen2017}. Beyond the quantum channel Stein's lemma, various channel divergences have emerged to analyze different regimes of quantum channel discrimination. Establishing a unified framework that encompasses these divergences and discrimination regimes is a desirable step toward a deeper understanding of the manipulation and operational characterization of quantum channels.

In this work, we provide a study towards the utimiate limits of quantum channel discrimination and quantum communication. Our contributions are summerized as follows:
\begin{itemize}
\item In Section~\ref{sec: preliminaries}, we present several technical advancements in quantum divergences for quantum states and channels. Specifically, we provide a quantitative improvement in lower bounding the quantum hypothesis testing relative entropy using the Petz \Renyi divergence, addressing an open question posed by Nuradha and Wilde in~\cite[Remark 4]{nuradha2024fidelity}. Additionally, we demonstrate that the previously explored amortized and regularized channel divergences are generally additive under the tensor product of distinct quantum channels. These technical results are expected to be of independent interest and provide valuable tools for future research.
\item In Section~\ref{sec: Limits of quantum channel divergence}, we investigate the limits of the unstablized quantum channel divergences and prove a quantum channel analog of Stein's lemma without quantum memory assitance. To further strengthen the result, we introduce the (exponetially) strong converse properties for channel discrimination and establish its equivalence to the asymptotic equipartition property (AEP) of various quantum channel divergences as well as the continuity of the quantum channel \Renyi divergence. Leveraging these equivalent characterizations, we demonstrate, rather surprisingly, that the exponentially strong converse properties under coherent and sequential strategies are equivalent. 
\item In Section~\ref{sec: Quantum channel discrimination in different regimes}, we study the interplay between the strategies of channel discrimination (e.g., sequential, coherent, product), the operational regimes (e.g., error exponent, Stein exponent, strong converse exponent), and three variants of channel divergences (e.g., Petz, Umegaki, sandwiched). We find a nice correspondence which shows that the proper divergences to use (Petz, Umegaki, sandwiched) are determined by the operational regime of interest, while the types of channel extension (one-shot, regularized, amortized) are determined by the discrimination strategies. We determine the exponents of quantum channel discrimination in various regimes and contribute towards a complete picture of channel discrimination in a unified framework.

\item In Section~\ref{sec: Quantum communication as quantum channel discrimination}, we present a new perspective by framing the study of quantum communication problems as quantum channel discrimination tasks. This offers deeper insights into the intricate relationships between channel capacities, channel discrimination, and the mathematical properties of quantum channel divergences. Leveraging this connection, we demonstrate that the channel coherent information and quantum channel capacity can be precisely characterized as Stein exponent for discriminating between two quantum channels under product and coherent strategies without quantum memory assistance, respectively. Furthermore, we show that the strong converse property of quantum channel capacity, a long-standing open problem in quantum information theory, can be established if the channels being discriminated exhibit the strong converse property.
\end{itemize} 

Our technical results are primarily presented in terms of the unstabilized channel divergence, a versatile yet less explored notion of channel divergence compared to the more commonly studied divergences in the literature. This concept naturally arises in the context of quantum communication problems, offering a broader framework for analysis. Given the extensive applications of quantum divergences and quantum channel discrimination~\cite{Chitambar2018,Gour2019,WW2019,Fang2020,Regula2021,Wang2017d,berta2018amortized,Wang2017a,fang2019chain,Fang2018,Diaz2018,Saxena2020,WWS19,Faist2018a}, this work contributes to a more comprehensive understanding of the ultimate limits of quantum channel discrimination in various regimes. Moreover, it provides a novel perspective on quantum communication problems by framing them as tasks of quantum channel discrimination, thereby bridging two fundamental areas of quantum information science and paving the way for addressing the remaining challenges in the future studies.

\section{Preliminaries}
\label{sec: preliminaries}

In this section, we introduce the notations to be employed throughout the paper. Subsequently, we investigate the mathematical tool of quantum divergences as applied to both states and channels. Following this, we review the operational task of quantum channel discrimination under different strategies and consider scenarios both with and without quantum memory assistance.

\subsection{Notation}

In this paper, we only consider finite-dimensional Hilbert spaces, which we denote with capital Latin letters such as $C$. The dimension of a Hilbert space $C$ is denoted by $|C|$. The set of linear operators on Hilbert space $C$ is denoted by $\ml(C)$ and the set of density matrices acting on it by $\md(C)$. Density matrices are represented by small Greek letters such as $\rho_C$, where the subscript indicates that $\rho$ acts on $C$. For a state $\rho_{AB}\in\md(AB)$ we will also use the convention that $\rho_A=\tr_B\left[\rho_{AB}\right]$ denotes the marginal on system $A$. The support of an operator $X$ is denoted by $\supp(X)$. The projector onto the subspace spanned by the positive eigenvalues of $X$ is represented as $X_+$. The identity operator is denoted by $I$ and the maximally mixed state is denoted by $\pi$. Quantum channels will be denoted by calligraphic large Latin letters such as $\mN$ and the set of all quantum channels from $A$ to $B$ by CPTP$(A\rightarrow B)$, which stands for completely positive and trace-preserving maps. The identity channel is represented by $\cI$, while the replacer channel is denoted as $\cR^\sigma$, which maps any input state to the fixed state $\sigma$. Throughout the paper we take the logarithm to be base two unless stated otherwise. 

\subsection{Quantum divergences}

A divergence between two quantum states is defined as a real-valued function  $\boldD:\md \times \md \to \RR \cup \{\infty\}$ subject to the data processing inequality $\boldD(\cE(\rho)\|\cE(\sigma)) \leq \boldD(\rho\|\sigma)$ for all quantum states $\rho,\sigma \in \md(A)$ and quantum channel $\cE \in \cptp(A\to B)$. Divergences serve as crucial tools for quantifying the distinguishability of quantum states. In our discussion, we will frequently employ the following quantum divergences, which hold particular relevance.

\begin{definition}[Umegaki relative entropy]
The Umegaki relative entropy (also called quantum relative entropy) between two quantum states $\rho,\sigma \in \md(A)$ is defined by~\cite{umegaki1954conditional}
\begin{align}\label{eq: Umegaki}
  D(\rho\|\sigma):= \tr[\rho(\log \rho - \log \sigma)],
\end{align}
if $\supp(\rho) \subseteq \supp(\sigma)$ and $+\infty$ otherwise.
\end{definition}

\begin{definition}[Petz \Renyi divergence]
The Petz \Renyi divergence of order $\alpha$ between two quantum states $\rho,\sigma \in \md(A)$ is defined by~\cite{petz1986quasi}
\begin{align}\label{eq: Petz}
  \bD_\alpha(\rho\|\sigma)\eqdef \frac{1}{\alpha-1}\log\tr\left[\rho^\alpha\sigma^{1-\alpha}\right],
\end{align}
if $\alpha \in (0,1)$ or $\alpha \in (1,+\infty)$ with $\supp(\rho) \subseteq \supp(\sigma)$, and $+\infty$ otherwise. 
\end{definition}

\begin{definition}[Sandwiched \Renyi divergence]
The sandwiched \Renyi divergence of order $\alpha$ between two quantum states $\rho,\sigma \in \md(A)$ is defined by~\cite{muller2013quantum,wilde2014strong}
\begin{align}\label{eq: sandwiched}
  \tD_{\alpha}(\rho\|\sigma)\eqdef \frac{1}{\alpha-1}\log\tr\left[\sigma^{\frac{1-\alpha}{2\alpha}}\rho\sigma^{\frac{1-\alpha}{2\alpha}}\right]^\alpha,
\end{align}
if $\alpha \in (0,1)$ or $\alpha \in (1,+\infty)$ with $\supp(\rho) \subseteq \supp(\sigma)$, and $+\infty$ otherwise.
\end{definition}

\begin{definition}[Max-relative entropy]
The max-relative entropy between two quantum states $\rho,\sigma \in \md(A)$ is defined by~\cite{datta2009min,renner2005security}
\begin{align}\label{eq: definition of Dmax}
D_{\max}(\rho\|\sigma)\eqdef\log\inf\big\{t \in \RR \;:\; \rho \leq t\sigma \big\}\;,
\end{align}
if $\supp(\rho) \subseteq \supp(\sigma)$ and $+\infty$ otherwise. Let $F(\rho,\sigma) := \|\sqrt{\rho}\sqrt{\sigma}\|_1+\sqrt{(1-\tr\rho)(1-\tr \sigma)}$ be the generalized fidelity and $P(\rho,\sigma):= \sqrt{1-F^2(\rho,\sigma)}$ be the purified distance. Let $\ve \in (0,1)$. Then the smoothed max-relative entropy is defined by
\begin{align}
  D_{\max}^\ve(\rho\|\sigma):= \inf_{\rho':P(\rho',\rho) \leq \ve} D_{\max}(\rho'\|\sigma),
\end{align}
where the infimum is taken over all subnormalized states that are $\ve$-close to the state $\rho$.
\end{definition}

\begin{definition}[Quantum hypothesis testing]
Let $\ve \in [0,1]$. The quantum hypothesis testing relative entropy between two quantum state $\rho,\sigma \in \md(A)$ is defined by
\begin{align}\label{eq: DH}
  D_H^\ve(\rho\|\sigma):= -\log \inf \{\tr [\Pi \sigma]: 0 \leq \Pi \leq I, \tr[\Pi\rho] \geq 1-\ve\}.
\end{align}
\end{definition}

The following result establishes an inequality relating the quantum hypothesis testing relative
entropy and the sandwiched \Renyi divergence~\cite[Lemma 5]{cooney2016strong}.
For any $\alpha \in (1,+\infty)$ and $\varepsilon\in(0,1)$, it holds that
\be\label{eq: DH and sandwiched}
D_H^{\ve}(\rho\|\sigma)\leq \tD_{\alpha}(\rho\|\sigma) + \frac{\alpha}{\alpha-1}\log\frac{1}{1-\varepsilon}.
\ee

The quantum hypothesis testing relative entropy can also be lower bounded by the Petz \Renyi divergence~\cite[Proposition 3]{qi2018applications}. For any $\alpha \in (0,1)$ and $\ve \in (0,1)$, it holds that
\begin{align}
D_H^{\ve}(\rho\|\sigma) \geq \bD_\alpha(\rho\|\sigma) - \frac{\alpha}{1-\alpha} \log \frac{1}{\ve}.
\end{align}

Here we provide a tighter lower bound with a simple proof, addressing the open question posed by Nuradha and Wilde in~\cite[Remark 4]{nuradha2024fidelity}.

\begin{lemma}\label{thm:boundsdmin}
Let $\eps\in(0,1)$ and $\rho,\sigma\in\md(A)$.  
For any $\alpha\in(0,1)$,
\be\label{e2}
D_{H}^\eps(\rho\|\sigma)\geq \bD_\alpha(\rho\|\sigma)+\frac\alpha{1-\alpha}\left(\frac{h(\alpha)}{\alpha}-\log\left(\frac1\eps\right)\right)\;,
\ee
where $h(\alpha)=-\alpha\log\alpha-(1-\alpha)\log(1-\alpha)$ is the binary entropy.
\end{lemma}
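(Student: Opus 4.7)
The plan is to prove the lower bound on $D_H^\eps(\rho\|\sigma)$ by an achievability argument: exhibit a valid test whose Type~II error is at most the claimed upper bound. The construction combines a Neyman--Pearson-type randomized test with Audenaert's trace inequality from the quantum Chernoff bound, followed by a one-variable calculus optimization.

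First, for some $\lambda>0$ and $\gamma\in[0,1]$, I would set $M := \{\rho > \lambda\sigma\} + \gamma\{\rho = \lambda\sigma\}$. By a standard continuity argument (the quantum Neyman--Pearson lemma), one can choose $(\lambda,\gamma)$ so that $\tr[M\rho]=1-\eps$ exactly. The key algebraic observation is that $\rho - \lambda\sigma$ vanishes on the kernel projection $\{\rho = \lambda\sigma\}$, so that component contributes nothing to $\tr[M(\rho-\lambda\sigma)]$, giving $\tr[M(\rho-\lambda\sigma)] = \tr[(\rho-\lambda\sigma)_+]$. Combined with $\tr[M\rho] = 1-\eps$, this yields the identity
\begin{equation*}
\eps + \lambda\tr[M\sigma] \;=\; 1 - \tr[(\rho-\lambda\sigma)_+].
\end{equation*}

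Next, Audenaert's inequality (from the proof of the quantum Chernoff bound), namely $\tr[A^\alpha B^{1-\alpha}]\ge \tfrac12(\tr A + \tr B - \|A-B\|_1)$ for $A,B\ge 0$ and $\alpha\in(0,1)$, applied with $A = \rho$ and $B = \lambda\sigma$, rearranges to $\tr[(\rho-\lambda\sigma)_+] \ge 1 - \lambda^{1-\alpha}s_\alpha$ with $s_\alpha := \tr[\rho^\alpha\sigma^{1-\alpha}]$. Substituting into the identity above gives $\tr[M\sigma] \le f(\lambda) := \lambda^{-\alpha}s_\alpha - \eps/\lambda$, and hence $D_H^\eps(\rho\|\sigma) \ge -\log f(\lambda)$ at this specific $\lambda$. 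An elementary computation shows that $f$ is maximized over $\lambda>0$ at $\lambda^\star = [\eps/(\alpha s_\alpha)]^{1/(1-\alpha)}$ with value $f(\lambda^\star) = [\alpha^\alpha(1-\alpha)^{1-\alpha}s_\alpha/\eps^\alpha]^{1/(1-\alpha)}$. Crucially, because $\lambda^\star$ is a \emph{maximum} of $f$, one has $f(\lambda)\le f(\lambda^\star)$ \emph{automatically}, so $D_H^\eps(\rho\|\sigma) \ge -\log f(\lambda^\star)$. Expanding the logarithm using $h(\alpha) = -\log[\alpha^\alpha(1-\alpha)^{1-\alpha}]$ and $\bD_\alpha(\rho\|\sigma) = -\tfrac{1}{1-\alpha}\log s_\alpha$ reproduces exactly the right-hand side of~\eqref{e2}.

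The main obstacle is the apparent mismatch in the third step: the Neyman--Pearson parameter $\lambda$ is fixed by the spectral structure of $\rho$ and $\sigma$, so we cannot a priori insist $\lambda = \lambda^\star$. The happy accident that makes the argument work is that $\lambda^\star$ is a \emph{maximum} (not a minimum) of $f$, which makes the inequality $-\log f(\lambda) \ge -\log f(\lambda^\star)$ go in the favorable direction no matter which $\lambda$ the Neyman--Pearson construction produces. Note that a naive use of the same Audenaert estimate with the projective test $\{\rho\ge\lambda\sigma\}$ at $\lambda$ satisfying $\lambda^{1-\alpha}s_\alpha = \eps$ gives only the strictly weaker bound $\bD_\alpha(\rho\|\sigma) - \tfrac{1}{1-\alpha}\log(1/\eps)$; the randomization at the kernel of $\rho-\lambda\sigma$ is precisely what allows the $h(\alpha)/(1-\alpha)$ improvement.
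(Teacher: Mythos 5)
Your proof is correct, and it rests on exactly the same two ingredients as the paper's: Audenaert's trace inequality $\tr[A^\alpha B^{1-\alpha}]\ge\tfrac12(\tr A+\tr B-\|A-B\|_1)$ applied to $\rho$ versus a scaled copy of $\sigma$, and the one-variable optimization whose solution produces the $h(\alpha)$ term. The difference is purely one of primal versus dual packaging. The paper starts from the variational identity $2^{-D_H^\eps(\rho\|\sigma)}=\max_{t\ge0}\{t(1-\eps)-\tr(t\rho-\sigma)_+\}$, inserts the Audenaert bound $\tr(t\rho-\sigma)_+\ge t-t^\alpha\tr[\rho^\alpha\sigma^{1-\alpha}]$, and then maximizes the resulting upper bound over $t$; since $t$ is a free dual variable, the optimizer $t^\star$ can simply be substituted, and your $\lambda^\star$ is exactly $1/t^\star$. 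You instead construct the primal optimizer explicitly --- a randomized Neyman--Pearson test calibrated to $\tr[M\rho]=1-\eps$ --- and bound its Type~II error. This buys a more operational reading (an explicit test achieving the bound) at the cost of two extra pieces of bookkeeping the dual route avoids: the existence of $(\lambda,\gamma)$ achieving $\tr[M\rho]=1-\eps$ exactly (which needs a word about the degenerate case where $\rho$ has enough weight outside $\supp(\sigma)$ that $D_H^\eps=+\infty$ and the claim is vacuous), and the observation that the inequality $f(\lambda)\le f(\lambda^\star)$ points the right way regardless of which $\lambda$ the calibration produces. Your closing remark that the randomization on $\{\rho=\lambda\sigma\}$ is what upgrades the cruder $-\tfrac{1}{1-\alpha}\log(1/\eps)$ bound to the $\tfrac{\alpha}{1-\alpha}\log(1/\eps)+\tfrac{h(\alpha)}{1-\alpha}$ form is a nice addition not present in the paper.
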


\begin{proof}
Recall a variational expression of the hypothesis testing relative entropy~\cite[Eq.(2)]{buscemi2017quantum}
\be\label{2}
2^{-D^{\eps}_{H}\left(\rho\|\sigma\right)}=\max_{t \geq 0}\Big\{t(1-\eps)-\tr(t\rho-\sigma)_+\Big\}\;.
\ee
 To bound the term $\tr(t\rho-\sigma)_+$ we use the quantum weighted geometric-mean inequality; i.e. for any two positive semidefinite matrices $M,N$ and any $\alpha\in[0,1]$
\be\label{7167}
\frac{1}{2}\tr\Big[M+N-\big|M-N\big|\Big]\leq \tr\big[M^{\alpha}N^{1-\alpha}\big]\;.
\ee
Since the term $|M-N|$ can be expressed as $|M-N|=2(M-N)_+-(M-N)$, the above inequality is equivalent to
\be
\tr(M-N)_+\geq \tr[M]-\tr\big[M^{\alpha}N^{1-\alpha}\big].
\ee
Taking $M=t\rho$ and $N=\sigma$ we have
\ba\label{sr1t}
\tr(t\rho-\sigma)_+\geq t-t^{\alpha}\tr\big[\rho^{\alpha}\sigma^{1-\alpha}\big]=t-t^\alpha2^{(\alpha-1)\bD_\alpha(\rho\|\sigma)}\;.
\ea
Substituting this into~\eqref{2} gives
\ba\label{optsti}
2^{-D^{\eps}_{H}\left(\rho\|\sigma\right)}&=\max_{t\geq0}\Big\{t(1-\eps)-\tr(t\rho-\sigma)_+\Big\}\leq \max_{t\geq 0}\Big\{-t\eps+t^\alpha2^{(\alpha-1)\bD_\alpha(\rho\|\sigma)}\Big\}\;.
\ea
It is straightforward to check that for fixed $\alpha,\rho,\sigma,\eps$, the function $t\mapsto -t\eps+t^\alpha2^{(\alpha-1)\bD_\alpha(\rho\|\sigma)}$ obtains its maximal value at 
\be
t=\left(\frac{\alpha}{\eps}\right)^{\frac1{1-\alpha}}2^{-\bD_\alpha(\rho\|\sigma)}\;.
\ee
Substituting this value into the optimization in~\eqref{optsti} gives
\be
2^{-D^{\eps}_{H}\left(\rho\|\sigma\right)}\leq (1-\alpha)\left(\frac{\alpha}\eps\right)^{\frac\alpha{1-\alpha}}2^{-\bD_\alpha(\rho\|\sigma)}\;.
\ee
By taking $-\log$ on both sides we get~\eqref{e2} and conclude the proof.
\end{proof} 
\vspace{0.2cm}

\vspace{0.2cm}

\subsection{Quantum channel divergences}

The divergence between quantum states can be naturally extended to quantum channels. The key idea is to quantify the worst-case divergence among the outputs produced by these channels. Depending on the selection of input states, three distinct variants of quantum channel divergences arise, namely unstabilized, stabilized, and amortized divergences. It is noteworthy that channel divergences have been served as crucial tools in various fundamental areas, including the resource theory of quantum channels~\cite{Chitambar2018,Gour2019,WW2019,Fang2020,Regula2021,theurer2025single}, quantum communication~\cite{Wang2017d,berta2018amortized,Wang2017a,fang2019chain,Fang2018,george2022finite}, quantum coherence~\cite{Diaz2018,Saxena2020}, fault-tolerant quantum computing~\cite{WWS19}, and quantum thermodynamics~\cite{Faist2018a}. We review their definitions here and provide several general properties, which will be used in the later discussions and can be of independent interests for future studies as well.

\subsubsection{Unstabilized quantum channel divergence}

\begin{definition}\label{def: unstabilized channel divergence}
Let $\boldD$ be a quantum state divergence. The unstabilized quantum channel divergence between two quantum channels $\cN,\cM\in \cptp(A\to B)$ is defined by
\begin{align}\label{eq: unstabilized channel divergence}
  \boldsymbol{d}(\cN\|\cM):= \sup_{\rho \in \md(A)} \boldD(\cN_{A\to B}(\rho_A)\|\cM_{A\to B}(\rho_A)),
\end{align}
where the supremum is taken over all density operators $\rho$ on system $A$.
\end{definition}

The term ``unstabilized'' arises from the observation that the divergence value typically varies when appending an identity map, as expressed by the inequality:
\begin{align}
\boldsymbol{d}(\mathcal{N}\|\mathcal{M}) \neq \boldsymbol{d}(\mathcal{N}\otimes \mathcal{I}\|\mathcal{M} \otimes \mathcal{I}).
\end{align}
This distinguishes it from the conventional channel divergence~\cite[Definition II.2]{leditzky2018approaches}.

In the following, we use $d,\td,\bd,d_{\max}^\ve,d_H^\ve$ to represent the unstablized channel divergences induced by $D, \tD, \bD, D_{\max}^\ve, D_H^\ve$, respectively. 

Many properties of state divergence can be extended to channel divergences. For instance, the following continuity property holds true.

\begin{lemma}\label{lem: one shot continuity}
Let $\bar{d}_\alpha, \tilde{d}_\alpha$ and $d$ be the unstablized quantum channel divergences induced by the Petz \Renyi divergence, the sandwiched \Renyi divergence and the Umegaki relative entropy, respectively. Then for any $\mN,\mM\in\cptp(A\to B)$, it holds that
\be
\lim_{\alpha\to 1} \bar{d}_\alpha(\mN\|\mM)=\lim_{\alpha\to 1}\tilde{d}_\alpha(\mN\|\mM)=d(\mN\|\mM)\;.
\ee
\end{lemma}

\begin{proof}
The proof follows similarly as~\cite[Lemma 10]{cooney2016strong}.
\end{proof} \vspace{0.2cm}

A widely-studied unstabilized channel divergence is the min-ouput entropy~\cite{hastings2009superadditivity}
\begin{align}
h(\cN):= \min_{\rho \in \md(A)} H(\cN(\rho)) = \log |B| - \ulD(\cN\|\cR^{\pi}_E),
\end{align}
where $\cN \in \cptp(A\to B)$ and the maximally mixed state $\pi \in \md(B)$.
It is known that this quantity is not additive under tensor product of quantum channels~\cite{hastings2009superadditivity}. 

Given that an unstabilized quantum channel divergence is generally non-additive, it is natural to introduce its regularized counterpart.

\begin{definition} \label{def: regularized unstabilized channel divergence}
Let $\boldD$ be a quantum state divergence. For any $\cN, \cM \in \cptp(A\to B)$, the regularized version of the unstabilized channel divergence is defined by 
 \begin{align}
   \boldsymbol{d}^\reg(\cN\|\cM) := \sup_{n \in \mathbb{N}} \frac1n \boldsymbol{d}(\cN^{\ox n}\|\cM^{\ox n}).
 \end{align}
\end{definition}

If the quantum state divergence $\boldD$ is superadditive under tensor product, i.e., \begin{align}
\boldD(\rho_1\ox \rho_2\|\sigma_1\ox \sigma_2) \geq \boldD(\rho_1\|\sigma_1) + \boldD(\rho_2\|\sigma_2),    
\end{align} 
then it is easy to check that its unstablized channel divergence is also superadditive, i.e., 
\begin{align}
\boldsymbol{d}(\cN_1\ox \cN_2\|\cM_1\ox \cM_2) \geq \boldsymbol{d}(\cN_1\|\cM_1) + \boldsymbol{d}(\cN_2\|\cM_2).   
\end{align}
Using a standard argument, we also have
\begin{align}
    \boldsymbol{d}^\reg(\cN\|\cM) = \lim_{n\to\infty} \frac1n \boldsymbol{d}(\cN^{\ox n}\|\cM^{\ox n}).
\end{align}

Later, as demonstrated in Theorem~\ref{thm: extremely non-additive}, we will see that the unstabilized channel divergence can exhibit an \emph{extremely} non-additive behavior. In other words, an \emph{unbounded} number of channel uses may be necessary to achieve its regularization.

\subsubsection{Stabilized quantum channel divergence}

The unstabilized quantum channel divergence exhibits deviation when an identity map is appended. To mitigate this, we can consider a stabilized version that allows the inclusion of an identity map.

\begin{definition}
Let $\boldD$ be a quantum state divergence. The (stabilized) quantum channel divergence between two quantum channels $\cN,\cM\in \cptp(A\to B)$ is defined by~\cite{leditzky2018approaches}
\begin{align}\label{eq: definition of channel divergence}
  \boldD(\cN\|\cM) := \sup_{|R| \in \mathbb{N}} \boldsymbol{d}(\cI_R \ox \cN\|\cI_R\ox \cM),
\end{align}
where the supremum is taken over Hilbert space $R$ of arbitrary dimension.
\end{definition} 

\begin{remark}
As a consequence of purification, data processing, and the Schmidt decomposition, the supremum can be constrained such that the reference system $R$ is isomorphic to the channel input system $A$~\cite{leditzky2018approaches}. Thus,
$\boldD(\cN\|\cM) = \boldsymbol{d}(\cI_R \ox \cN\|\cI_R\ox \cM)$,
where $R$ is isomorphic to $A$.
\end{remark}

Similar to the unstabilized channel divergence, the stabilized version is non-additive~\cite{fang2019chain} in general. This observation motivates the introduction of their regularization.

\begin{definition}
Let $\boldD$ be a quantum state divergence. For any $\cN, \cM \in \cptp(A\to B)$, the regularized version of the stabilized channel divergence is defined by 
\begin{align}\label{eq: regularized channel divergence}
\boldD^\reg(\mN\|\mM) := \sup_{n \in \mathbb{N}} \frac{1}{n}\, \boldD\left(\mN^{\otimes n}\big\|\mM^{\otimes n}\right).
\end{align} 
\end{definition}

\subsubsection{Amortized quantum channel divergence}

Both the unstabilized and stabilized channel divergences assess the distinguishability of channel outputs using the same input state. Alternatively, a method for inducing channel divergence is amortization, which uses different input states.

\begin{definition}
Let $\boldD$ be a quantum state divergence. The amortized quantum channel divergence between two quantum channels $\cN,\cM\in \cptp(A\to B)$ is defined by~\cite{berta2018amortized}
\begin{align}\label{eq: amortized channel divergence}
\boldD^A(\cN\|\cM):= \sup_{\rho,\sigma \in \md(RA)} \Big[\boldD\left(\cI_R\ox \cN(\rho_{RA})\|\cI_R \ox \cM(\sigma_{RA})\right) - \boldD\left(\rho_{RA}\|\sigma_{RA}\right)\Big],
\end{align}
where the supremum is taken over all quantum states $\rho,\sigma \in \md(RA)$ and $R$ is of arbitrary dimension.
\end{definition}

As previously mentioned, both unstabilized and stabilized channel divergences are generally non-additive. In contrast, the amortized channel divergence can inherit the additivity property from the corresponding state divergence.

\begin{lemma}\label{lem: amortized subadditivity}
Let $\boldD$ be a quantum state divergence. Let $\cN_1,\cM_1 \in \cptp(A_1\to B_1)$ and $\cN_2,\cM_2 \in \cptp(A_2\to B_2)$. If $\boldD$ is additive under tensor product of quantum states, then
  \begin{align}
    \boldD^A(\cN_1\ox \cN_2 \|\cM_1\ox \cM_2) = \boldD^A(\cN_1\|\cM_1) + \boldD^A(\cM_1\|\cM_2).
  \end{align}
\end{lemma}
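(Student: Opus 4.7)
The plan is to prove both directions of the equality separately, using the telescoping/chain structure for the upper bound and a tensor product ansatz for the lower bound. Throughout I assume (as stated) that $\boldD$ is additive for product states, i.e. $\boldD(\rho_1\ox \rho_2 \| \sigma_1\ox \sigma_2) = \boldD(\rho_1\|\sigma_1) + \boldD(\rho_2\|\sigma_2)$.

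\textbf{Subadditivity direction.} First I would prove $\boldD^A(\cN_1\ox \cN_2 \|\cM_1\ox \cM_2) \leq \boldD^A(\cN_1\|\cM_1) + \boldD^A(\cN_2\|\cM_2)$ by a two-step telescoping argument. Fix arbitrary inputs $\psi_{RA_1A_2},\varphi_{RA_1A_2}$ and insert the intermediate states $(\cN_1\ox \id_{A_2})(\psi)$ and $(\cM_1\ox \id_{A_2})(\varphi)$, living on $RB_1A_2$. Writing the full difference as
\begin{align*}
&\boldD\bigl((\cN_1\ox \cN_2)(\psi)\,\big\|\,(\cM_1\ox \cM_2)(\varphi)\bigr) - \boldD(\psi\|\varphi) \\
&= \Bigl[\boldD\bigl((\cN_1\ox \cN_2)(\psi)\,\big\|\,(\cM_1\ox \cM_2)(\varphi)\bigr) - \boldD\bigl((\cN_1\ox\id)(\psi)\,\big\|\,(\cM_1\ox\id)(\varphi)\bigr)\Bigr]\\
&\quad + \Bigl[\boldD\bigl((\cN_1\ox\id)(\psi)\,\big\|\,(\cM_1\ox\id)(\varphi)\bigr) - \boldD(\psi\|\varphi)\Bigr],
\end{align*}
the first bracket is bounded by $\boldD^A(\cN_2\|\cM_2)$ (viewing $RB_1$ as the reference system for the channel $\cN_2$, which is allowed since the reference in the definition is arbitrary), and the second bracket is bounded by $\boldD^A(\cN_1\|\cM_1)$ (with $RA_2$ as the reference for $\cN_1$). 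Taking the supremum over $\psi,\varphi$ on the left completes this direction.

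\textbf{Superadditivity direction.} For the lower bound I would use product test inputs. Let $\psi^{(i)}_{R_iA_i},\varphi^{(i)}_{R_iA_i}$ be any states used in the definition of $\boldD^A(\cN_i\|\cM_i)$ for $i=1,2$. Plug the tensor products $\psi^{(1)}\ox \psi^{(2)}$ and $\varphi^{(1)}\ox \varphi^{(2)}$ into the defining expression of $\boldD^A(\cN_1\ox\cN_2\|\cM_1\ox\cM_2)$. Because $\cN_1\ox\cN_2$ acts factor-wise, the channel outputs also factorize, so by the assumed additivity of $\boldD$ for product states the resulting expression collapses to
\begin{align*}
&\Bigl[\boldD\bigl(\cN_1(\psi^{(1)})\,\big\|\,\cM_1(\varphi^{(1)})\bigr) - \boldD(\psi^{(1)}\|\varphi^{(1)})\Bigr] \\
&\quad + \Bigl[\boldD\bigl(\cN_2(\psi^{(2)})\,\big\|\,\cM_2(\varphi^{(2)})\bigr) - \boldD(\psi^{(2)}\|\varphi^{(2)})\Bigr].
\end{align*}
Taking suprema independently over $(\psi^{(1)},\varphi^{(1)})$ and $(\psi^{(2)},\varphi^{(2)})$ yields the desired lower bound.

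\textbf{Main obstacles.} The superadditivity direction is essentially automatic once additivity for states is invoked. The only subtle point, and the one I would be most careful with, is the subadditivity step: the reference system for $\cN_i$ in the telescoped inequalities is not simply $R$ but the enlarged system $RA_2$ or $RB_1$. This is fine because the definition in Eq.~\eqref{eq: amortized channel divergence} allows arbitrary reference dimension, but I would explicitly note that the amortized divergence $\boldD^A(\cN_i\|\cM_i)$ dominates every such difference regardless of the reference, which is exactly what the ``arbitrary $R$'' clause provides. No assumption beyond additivity for states (not even the data processing inequality) is needed in either direction of the argument.
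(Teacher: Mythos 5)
Your proof is correct and follows essentially the same route as the paper's: the upper bound is obtained by applying the definition of the amortized divergence twice in a telescoping fashion (with enlarged reference systems $RA_2$ and $RB_1$, which is permitted since the reference is arbitrary), and the lower bound by restricting to tensor-product test states and invoking additivity of $\boldD$ for states. Your observation that additivity is only needed for the superadditivity direction, and that no data processing inequality is required, is also accurate.
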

\begin{proof}
For any quantum state $\rho,\sigma \in \md(RA_1A_2)$, it holds that
\begin{align}
  \boldD(\cN_1\ox \cN_2 (\rho)\|\cM_1\ox \cM_2 (\sigma)) & \leq \boldD^A(\cN_1\|\cM_1) + \boldD(\cN_2(\rho)\|\cM_2(\sigma))\\
  & \leq \boldD^A(\cN_1\|\cM_1) + \boldD^A(\cN_2\|\cM_2) + \boldD(\rho\|\sigma),\label{eq: D amortized subadditive}
\end{align}
where the two inequalities follow by using the definition of the amortized channel divergence twice. Then moving the term $\boldD(\rho\|\sigma)$ to the l.h.s. and taking supremum over all input states $\rho,\sigma$, we have one direction of the stated result. On the other hand, for any input states $\rho_1,\sigma_1 \in \md(R_1A_1)$ and $\rho_2,\sigma_2 \in \md(R_2A_2)$, we have
\begin{align}
& \boldD^A(\cN_1\ox \cN_2 \|\cM_1\ox \cM_2)\\
& \geq \sup_{\rho_1,\rho_2,\sigma_1,\sigma_2} \Big[\boldD(\cN_1\ox \cN_2 (\rho_1\ox \rho_2)\|\cM_1\ox \cM_2 (\sigma_1\ox \sigma_2)) - \boldD(\rho_1\ox \rho_2\|\sigma_1\ox \sigma_2)\Big]\\
& =  \sup_{\rho_1,\rho_2,\sigma_1,\sigma_2} \Big[\boldD(\cN_1(\rho_1)\|\cM_1(\sigma_1) - \boldD(\rho_1\|\sigma_1)\Big] + \Big[\boldD(\cN_2(\rho_2)\|\cM_2(\sigma_2) - \boldD(\rho_2\|\sigma_2)\Big]\\
& =  \sup_{\rho_1,\sigma_1} \Big[\boldD(\cN_1(\rho_1)\|\cM_1(\sigma_1) - \boldD(\rho_1\|\sigma_1)\Big] + \sup_{\rho_2,\sigma_2} \Big[\boldD(\cN_2(\rho_2)\|\cM_2(\sigma_2) - \boldD(\rho_2\|\sigma_2)\Big]\\
& = \boldD^A(\cN_1\|\cM_1) + \boldD^A(\cN_2\|\cM_2),
\end{align}
where the inequality follows as tensor product states are particular choices of input states for $\boldD^A(\cN_1\ox \cN_2 \|\cM_1\ox \cM_2)$, the first equality follows by the additivity assumption of $\boldD$. This concludes the proof.
\end{proof} \vspace{0.2cm}

By the chain rules of Umegaki relative entropy~\cite[Corollary 3]{fang2019chain} and the sandwiched Rényi divergence~\cite[Theorem 5.4]{fawzi2021defining}, it follows that $D^\reg(\cN\|\cM) = D^A(\cN\|\cM)$ and $\tD^\reg_\alpha(\cN\|\cM) = \tD^A_\alpha(\cN\|\cM)$ for any quantum channels $\cN,\cM\in \cptp(A\to B)$ and $\alpha > 1$. Consequently, from Lemma~\ref{lem: amortized subadditivity}, we can infer that $D^\reg$ and $\tD^\reg_\alpha$ are also additive under the tensor product of \emph{distinct} quantum channels. Establishing this directly from their definitions can be challenging.
    
\begin{lemma}
Let $\cN_1,\cM_1 \in \cptp(A_1\to B_1)$ and $\cN_2,\cM_2 \in \cptp(A_2\to B_2)$. For any $\alpha \in (1, +\infty)$, the following additivity properties hold    
\begin{align}
D^\reg(\cN_1\ox \cN_2 \|\cM_1\ox \cM_2) & = D^\reg(\cN_1\|\cM_1) + D^\reg(\cN_2 \|\cM_2),\\
\tD_\alpha^\reg(\cN_1\ox \cN_2 \|\cM_1\ox \cM_2) & = \tD_\alpha^\reg(\cN_1\|\cM_1) + \tD_\alpha^\reg(\cN_2 \|\cM_2).
\end{align}
\end{lemma}

The next result establishes the chain relation among different variants of channel divergences.

\begin{lemma}\label{lem: divergence order relation}
Let $\boldD$ be a quantum state divergence that is superadditive under tensor product of quantum states. Then for any $\mN,\mM\in\cptp(A\to B)$, it holds that
\begin{align}
  \boldsymbol{d}(\cN\|\cM) \leq \boldD(\cN\|\cM) \leq \boldD^\reg(\cN\|\cM) \leq \boldD^A(\cN\|\cM).
\end{align}
\end{lemma}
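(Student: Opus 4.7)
The plan is to prove the two inequalities separately, with both following quickly from the relevant definitions. The first inequality $\boldD(\cN\|\cM) \leq \boldD^\reg(\cN\|\cM)$ is essentially by inspection: the regularized divergence is defined as the supremum $\sup_{n\in\NN}\frac{1}{n}\boldD(\cN^{\ox n}\|\cM^{\ox n})$, so specializing to $n=1$ already gives the desired lower bound. No use of superadditivity is needed for this direction.

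For the second inequality $\boldD^\reg(\cN\|\cM) \leq \boldD^A(\cN\|\cM)$, the plan is a telescoping argument applied to the single-channel amortization bound
\begin{align*}
\boldD(\cN(\psi)\|\cM(\varphi)) \;\leq\; \boldD^A(\cN\|\cM) + \boldD(\psi\|\varphi),
\end{align*}
which is immediate from the definition of $\boldD^A$ in~\eqref{eq: amortized channel divergence}. Fix any input state $\psi\in\md(RA^n)$ and process the $n$ channel uses one at a time. After each step I regard the output registers together with the unused input registers $A_{k+1}\cdots A_n$ and the reference $R$ as the reference system for the next application, and I apply the inequality above with the same state feeding both the $\cN$- and $\cM$-branch. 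A straightforward induction on $k$ then yields $\boldD(\cN^{\ox k}(\psi)\|\cM^{\ox k}(\psi)) \leq k\,\boldD^A(\cN\|\cM)$, since the initial pair of identical states satisfies $\boldD(\psi\|\psi)=0$ (which is a property of any reasonable divergence, and anyway is standard for all the divergences of interest in this paper).

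Taking the supremum over $\psi$ gives $\boldD(\cN^{\ox n}\|\cM^{\ox n}) \leq n\,\boldD^A(\cN\|\cM)$, and dividing by $n$ and taking the supremum over $n\in\NN$ yields $\boldD^\reg(\cN\|\cM) \leq \boldD^A(\cN\|\cM)$. I do not anticipate a genuine obstacle here. The one subtle point is ensuring that $\boldD(\psi\|\psi)=0$, so that the base case of the induction is valid; this is implicit in the usual notion of a divergence, and holds for all the concrete examples (Umegaki, Petz, Sandwiched) used in the note. The superadditivity hypothesis itself is not required for the second inequality, but it is needed for the identification of $\boldD^\reg$ as a genuine limit (rather than merely a supremum), and justifies the consistency of Lemma~\ref{lem: regularized upper bound} with the ordering being proved.
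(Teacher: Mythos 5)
Your proof is correct and follows essentially the same route as the paper: the first inequality is the $n=1$ specialization of the supremum defining $\boldD^\reg$, and the second comes from iterating the one-step amortization inequality to get $\boldD(\cN^{\ox n}(\psi)\|\cM^{\ox n}(\psi))\leq n\,\boldD^A(\cN\|\cM)$ (the paper phrases this via $\boldD^A(\cN^{\ox n}\|\cM^{\ox n})\leq n\,\boldD^A(\cN\|\cM)$, which is the same telescoping). Your explicit flagging of the implicit requirement $\boldD(\psi\|\psi)=0$ is a fair point that the paper leaves tacit in the step ``$\boldD(\cN^{\ox n}\|\cM^{\ox n})\leq\boldD^A(\cN^{\ox n}\|\cM^{\ox n})$ by definition.''
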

\begin{proof}
The first two inequalities follow from their definitions. We also have that 
\begin{align}
\frac1n \boldD(\cN^{\ox n}\|\cM^{\ox n}) \leq \frac1n \boldD^A(\cN^{\ox n}\|\cM^{\ox n}) \leq \boldD^A(\cN\|\cM)   
\end{align}
where the first inequality follows by definition and the second inequality follows from~\eqref{eq: D amortized subadditive}. Taking the supremum over all integers $n$, we have $\boldD^\reg(\cN\|\cM) \leq \boldD^A(\cN\|\cM)$.
\end{proof} \vspace{0.2cm}

\subsection{Quantum channel discrimination}

The task of channel discrimination closely parallels that of state discrimination. In the case of an unknown quantum channel $\mathcal{G}$, the goal is to identify it among potential candidates $\mathcal{N}$ or $\mathcal{M}$. A standard approach to discrimination involves hypothesis testing to distinguish between the null hypothesis $\cG = \cN$ and the alternative hypothesis $\cG = \cM$. What distinguishes channel discrimination is the varied selection of discrimination strategies and whether the utilization of quantum memories is permitted. 

Different classes of available strategies are illustrated in Figure~\ref{fig: channel discrimination strategies}.
Each strategy class comprises two components, denoted as $(S_n, \Pi_n)$, where $S_n$ is a method for generating a testing state, and $\Pi_n$ ($0 \leq \Pi_n \leq I$) defines a quantum test, a binary quantum measurement $\{\Pi_n, I - \Pi_n\}$ performed on this state. For a given strategy $(S_n, \Pi_n)$, let $\rho_n(S_n)$ and $\sigma_n(S_n)$ be the testing states generated by $n$ uses of the channel, depending on whether it is $\mathcal{N}$ or $\mathcal{M}$. Then the Type-I and Type-II errors are defined as 
\begin{align}
  \text{(Type-I)} \quad \alpha_n(S_n,\Pi_n)  &:= \tr [(I - \Pi_n) \rho_n(S_n)],\label{eq: definition of two kinds of errors 1} \\ \text{(Type-II)} \quad \beta_n(S_n,\Pi_n)  &:= \tr[\Pi_n \sigma_n(S_n)],\label{eq: definition of two kinds of errors 2}
\end{align}
respectively.
As perfect discrimination (i.e., simultaneous elimination of both errors) is not always possible, the focus shifts to the asymptotic behavior of $\alpha_n$ and $\beta_n$ for sufficiently large $n$, expecting a tradeoff between minimizing $\alpha_n$ and minimizing $\beta_n$.

\begin{figure}[htb]
\centering
\includegraphics[width=\linewidth]{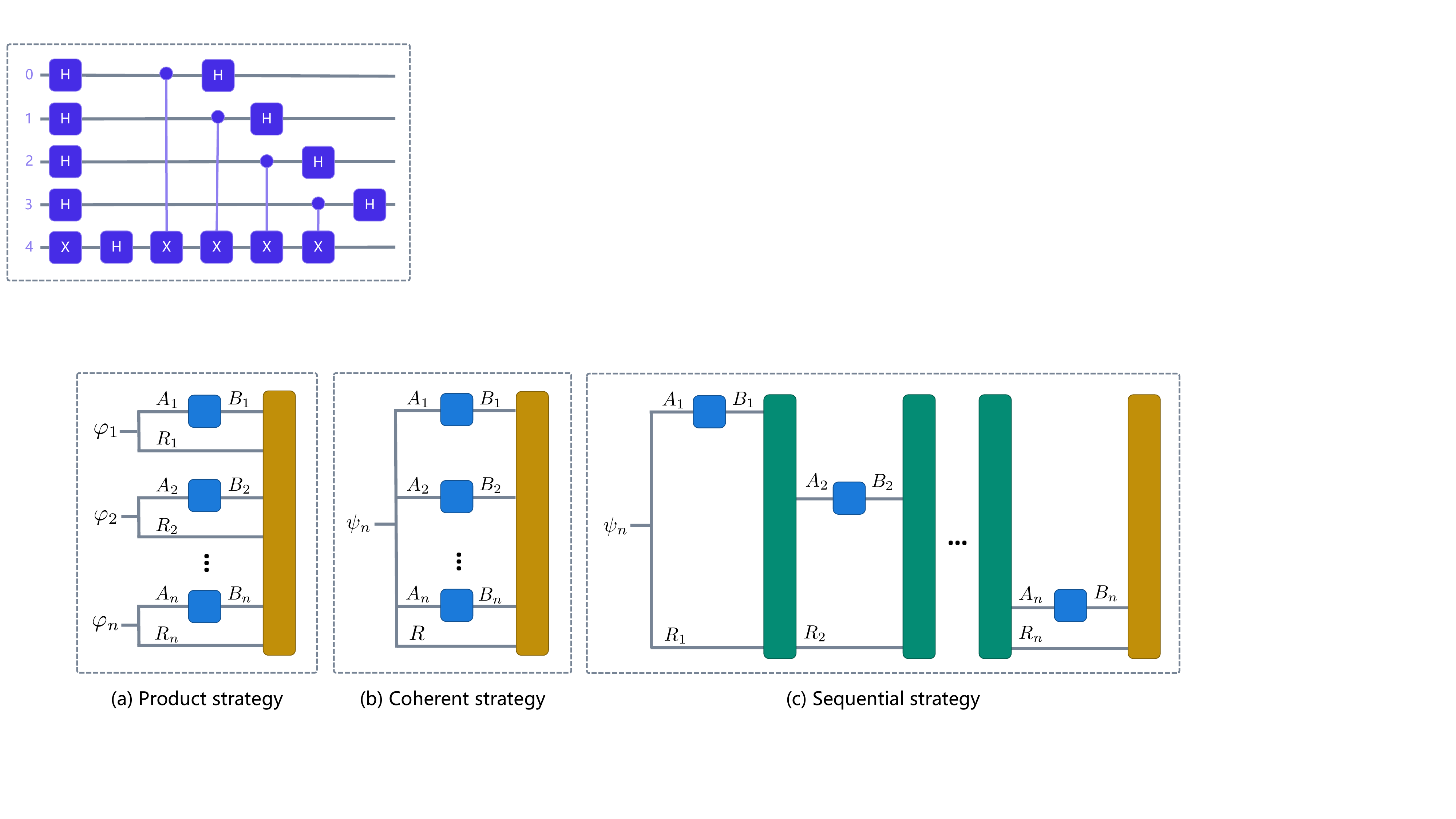}
\caption{Illustration depicting different classes of strategies for quantum channel discrimination. Each blue box represents an unknown quantum channel $\mathcal{G} \in \{\mathcal{N},\mathcal{M}\}$ to discriminate, each yellow box represents a quantum measurement $\{\Pi_n, I - \Pi_n\}$, and each green box represents an update channel $\mathcal{P}_i$.}
\label{fig: channel discrimination strategies}
\end{figure}

\paragraph{Product strategy} 
Let $R_i$ be the ancillary quantum system of a quantum memory for the $i$-th use of the quantum channel. In a product strategy (Figure~\ref{fig: channel discrimination strategies}(a)), the testing state is created by selecting a sequence of input states $\varphi_i \in \md(R_iA_i)$ and sending the $A_i$ system to the unknown channel $\mathcal{G}$ individually. The generated testing state is then given by $\mathcal{G}^{\otimes n}(\bigotimes_{i=1}^n \varphi_i)$. The class of all product strategies is denoted as $\pro$. It is important to note that the input states considered here are not restricted to have an i.i.d. structure (e.g., $\varphi^{\otimes n}$) but rather general tensor product states. In other words, we allow the choice of different input states for different instances of $\mathcal{G}$, distinguishing it from the product strategy discussed in~\cite{cooney2016strong}. If the dimension of the ancillary quantum system reduces to $1$, it corresponds to product strategies without quantum memory assistance.

\paragraph{Coherent strategy} 
Let $R$ be the ancillary quantum system of a quantum memory. In a coherent strategy (Figure~\ref{fig: channel discrimination strategies}(b)), the testing state is created by choosing an input state $\psi_n \in \md(RA^n)$ and sending the corresponding $A_i$ system to each copy of the channel. The generated testing state is then given by $\mathcal{G}^{\otimes n}(\psi_n)$. The class of all coherent strategies is denoted as $\coh$. It is evident that if our choice of $\psi_n$ has a tensor product structure $\bigotimes_{i=1}^n \varphi_i$ with $\varphi_i \in \md(R_iA_i)$, we effectively obtain a product strategy. Thus, we have the set inclusion $\pro \subseteq \coh$. If the dimension of the reference systems reduces to $1$, it corresponds to coherent strategies without quantum memory assistance.

\paragraph{Sequential strategy} 
Let $R_i$ be the ancillary quantum system of a quantum memory for the $i$-th use of the quantum channel. In a sequential strategy (Figure~\ref{fig: channel discrimination strategies}(c)), the testing state is created adaptively. Initially, we choose an initial state $\psi_n \in \md(R_1A_1)$ and send it through one copy of the channel $\mathcal{G}$ followed by the application of an update channel $\mathcal{P}_1$. Subsequently, another copy of the channel $\mathcal{G}$ is applied, followed by an update channel $\mathcal{P}_2$. This process is repeated $n$ times, resulting in the final testing state $\mathcal{G} \circ \mathcal{P}_{n-1} \circ \cdots \circ \mathcal{P}_2 \circ \mathcal{G} \circ \mathcal{P}_1 \circ \mathcal{G} (\psi_n)$, where $\mathcal{P}_i \in \cptp(R_iB_i \to R_{i+1}A_{i+1})$. The class of all sequential strategies is denoted as $\seq$. It is evident that if all update channels $\mathcal{P}_i$ are chosen as identity maps, the sequential strategy reduces to a coherent strategy. Thus, we have $\coh \subseteq \seq$. If the dimension of the ancillary quantum system reduces to $1$, it corresponds to sequential strategies without quantum memory assistance.

\section{Limits of quantum channel divergence}
\label{sec: Limits of quantum channel divergence}

In this section, we investigate the limits of the unstablized quantum channel divergences and prove a quantum channel analog of Stein's lemma without quantum memory assitance. To further strengthen the result, we introduce the (exponetially) strong converse properties for channel discrimination and establish its equivalence to the asymptotic equipartition property (AEP) of various quantum channel divergences as well as the continuity of the quantum channel \Renyi divergence. Leveraging these equivalent characterizations, we demonstrate, rather surprisingly, that the exponentially strong converse properties under coherent and sequential strategies are equivalent. 

Given the widespread applications of quantum Stein's lemma, our channel Stein's lemma is anticipated to have significant implications once its strong converse version is completely solved. Our results contribute to distinct perspectives towards establishing such a result and can serve as building blocks for its applications. This includes facilitating a deeper understanding of the tasks of quantum channel discrimination and quantum communication in subsequent sections.

\subsection{A quantum channel Stein’s lemma without memory assistance}

The following result establishes an analog of the Stein's lemma for quantum channels.

\begin{proposition}\label{prop: channel stein unstablized}
For any two quantum channels $\cN, \cM \in \cptp(A\to B)$, it holds that
\begin{align}
    \lim_{\ve \to 0} \lim_{n\to \infty} \frac{1}{n} \ulD_H^\ve(\cN^{\ox n}\|\cM^{\ox n}) = \ulD^\reg(\cN\|\cM).
\end{align}
\end{proposition}

\begin{proof}
Recall that for any $\rho,\sigma \in \md(A)$ and $\ve \in [0,1)$, it holds that 
\begin{align}
D_H^\ve(\rho\|\sigma) \leq \frac{1}{1-\ve}[D(\rho\|\sigma) + h_2(\ve)]
\end{align} 
where $h_2(\cdot)$ is the binary entropy (see e.g.~\cite{wang2012one}). Applying this to $\cN^{\ox n}(\rho_n)$ and $\cM^{\ox n}(\rho_n)$ and taking supremum over all input states $\rho_n \in \md(A^n)$, we have
\begin{align}
\ulD_H^\ve(\cN^{\ox n}\|\cM^{\ox n}) \leq \frac{1}{1-\ve} \big[{d}(\cN^{\ox n}\|\cM^{\ox n}) + h_2(\ve)\big].
\end{align}
Taking limits on both sides, we have
\begin{align}
 \lim_{\ve \to 0} \lim_{n\to \infty} \frac{1}{n} \ulD_H^\ve(\cN^{\ox n}\|\cM^{\ox n}) \leq \ulD^\reg(\cN\|\cM).  
\end{align}
For the other direction, suppose the optimal solution for $\ulD(\cN\|\cM)$ is taken at $\rho_A$. Then we have
\begin{align}
\lim_{n\to \infty} \frac{1}{n} \ulD_H^\ve(\cN^{\ox n}\|\cM^{\ox n})  \geq   \lim_{n\to \infty} \frac{1}{n} d_H^\ve(\cN(\rho)^{\ox n}\|\cM(\rho)^{\ox n}) = d(\cN(\rho)\|\cM(\rho)) = d(\cN\|\cM),
\end{align}
where the first inequality follows as $\rho^{\ox n}$ is a particular choice for the unstabilized divergence, the first equality follows by the quantum Stein's Lemma, the second equality follows by the optimality assumption of $\rho$. Then for any fixed $m$, by replacing $\cN$ with $\cN^{\ox m}$ and $\cM$ with $\cM^{\ox m}$, we have
\begin{align}
\lim_{n\to \infty} \frac{1}{mn} \ulD_H^\ve(\cN^{\ox mn}\|\cM^{\ox mn})  \geq  \frac{1}{m} \ulD(\cN^{\ox m}\|\cM^{\ox m}).
\end{align}
Finally taking $m \to \infty$ and then $\ve \to 0$, we have the achievable part and conclude the proof.
\end{proof} \vspace{0.2cm}

\subsection{Towards a strong converse version}
\label{sec: Exponentially Strong Converse of Channel Hypothesis Testing}

Similar to the strong converse property of quantum state discimination, an analog property can also be defined for quantum channels.

\begin{definition}[Strong converse property]\label{def: strong converse property}
Let $\cN, \cM \in \cptp(A \to B)$ be two quantum channels. These channels exhibit the strong converse property for coherent channel discrimination strategies without quantum memory assistance if, for any sequence of strategies where the Type-II errors $\beta_n$ satisfy
  \begin{align}\label{eq: Type-II error decays} \liminf_{n \to \infty} -\frac{1}{n} \log \beta_n =: r > d^\reg(\cN \| \cM), \end{align}
  there necessarily exists a subsequence of Type-I errors $\alpha_{n_k}$ that converges to $1$ as $n_k \to \infty$.
\end{definition}

If the strong converse property holds, the Type-I error will typically converge to one exponentially fast. Therefore, we introduce a stronger version by requiring exponential convergence and term this condition as an \emph{exponentially strong converse} property.

\begin{definition}[Exponentially strong converse property]\label{def: exponentially strong converse property}
  Let $\cN, \cM \in \cptp(A \to B)$ be two quantum channels. These channels exhibit the strong converse property for coherent channel discrimination strategies without quantum memory assistance if, for any sequence of strategies where the Type-II errors $\beta_n$ satisfy
  \begin{align}\label{eq: Type-II error expoential decays} \liminf_{n \to \infty} -\frac{1}{n} \log \beta_n =: r > d^\reg(\cN \| \cM), \end{align}
  there necessarily exists a subsequence of Type-I errors $\alpha_{n_k}$ such that  $1-\alpha_{n_k} \leq 2^{-cn_k}$ for a constant $c>0$ and for sufficiently large $n_k$.
\end{definition}

The strong converse properties require the study of all suitable discrimination stategies, which can be hard to validate in general. In the following, we provide several equivalent characterizations related to the limits of unstablized channel divergences.

From the proof of Proposition~\ref{prop: channel stein unstablized}, we actually have a stronger statement that
\begin{align}\label{eq: dH lower bound}
\lim_{n\to\infty}\frac1n d_{H}^\eps\left(\mN^{\otimes n}\big\|\mM^{\otimes n}\right) \geq d^{\reg}(\mN\|\mM), \quad \forall \eps\in (0,1).    
\end{align}

The following result shows that the other direction is equivalent to the strong converse property in Definition~\ref{def: strong converse property}.

\begin{theorem}\label{thm: channel stein lemma}
Let $\mN,\mM\in\cptp(A\to B)$ be two quantum channels. Then these channels exhibit the strong converse property as defined in Definition~\ref{def: strong converse property} if and only if the following holds
\be\label{eq: dH strong AEP}
\limsup_{n\to\infty}\frac1n d_{H}^\eps\left(\mN^{\otimes n}\big\|\mM^{\otimes n}\right) \leq d^{\reg}(\mN\|\mM), \quad \forall \eps\in (0,1).
\ee
\end{theorem}

\begin{proof}
Suppose the strong converse property as defined in Definition~\ref{def: strong converse property} holds and assume that 
\begin{align}\limsup_{n\to\infty}\frac1n d_{H}^\eps\left(\mN^{\otimes n}\big\|\mM^{\otimes n}\right) > d^{\reg}(\mN\|\mM),
\end{align} 
then there exists a subsequence $n_k$ such that $\lim_{n_k\to\infty} \frac{1}{n_k} d_{H}^{\eps}\left(\mN^{\otimes n_k}\big\|\mM^{\otimes n_k}\right) > d^{\reg}(\mN\|\mM)$.
This implies a sequence of strategies such that the Type-I error $\alpha_{n_k} \leq \ve$ and the Type-II error $\lim_{n_k\to \infty} -\frac{1}{n_k} \log \beta_{n_k} > d^\reg(\cN\|\cM)$. 
By Definition~\ref{def: strong converse property}, we know that the second condition implies a subsequence of $\alpha_{n_k}$ converges to $1$, which contradicts to the first condition $\alpha_{n_k} \leq \ve$. So Eq.~\eqref{eq: dH strong AEP} holds. On the other hand, we prove that Eq.~\eqref{eq: dH strong AEP} implies Definition~\ref{def: strong converse property}. For any strategies such that $\liminf_{n\to \infty}-\frac{1}{n}\log \beta_n > d^\reg(\cN\|\cM)$. We now show that there exists a subsequence of $\alpha_n$ converges to $1$. Assume there exists $0 < \ve < 1$ such that $\alpha_n \leq \ve$. By the definition of $d_H^\ve$, we have $-\frac{1}{n}\log \beta_n \leq d_H^\ve(\cN^{\ox n}\|\cM^{\ox n})$. This implies 
\begin{align}
  \liminf_{n\to \infty}-\frac{1}{n}\log \beta_n \leq \limsup_{n\to\infty}\frac1n d_{H}^\eps\left(\mN^{\otimes n}\big\|\mM^{\otimes n}\right) \leq d^{\reg}(\mN\|\mM),
\end{align} 
which forms a contradiction to the assumption that $\liminf_{n\to \infty}-\frac{1}{n}\log \beta_n > d^\reg(\cN\|\cM)$.
\end{proof} \vspace{0.2cm}

\vspace{0.2cm}

The following shows that the AEP of max-relative entropy is also equivalent to Definition~\ref{def: strong converse property}.

\begin{theorem}\label{thm: strong Dmax AEP}
Let $\mN,\mM\in\cptp(A\to B)$ be two quantum channels. Then these channels exhibit the strong converse property as defined in Definition~\ref{def: strong converse property} if and only if the following holds
\be\label{eq: dmax strong AEP}
\limsup_{n\to\infty}\frac1n d_{\max}^\eps\left(\mN^{\otimes n}\big\|\mM^{\otimes n}\right) \leq d^{\reg}(\mN\|\mM), \quad \forall \eps\in (0,1).
\ee
This is also equivalent to the following
\be\label{eq: dmax weak AEP}
\lim_{\ve \to 0}\limsup_{n\to\infty}\frac1n d_{\max}^\eps\left(\mN^{\otimes n}\big\|\mM^{\otimes n}\right) \leq d^{\reg}(\mN\|\mM).
\ee
\end{theorem}

\begin{proof}
By Theorem~\ref{thm: channel stein lemma}, we only need to prove that Eqs.~\eqref{eq: dH strong AEP},~\eqref{eq: dmax strong AEP} and~\eqref{eq: dmax weak AEP} are equivalent. 

(i) Eq.~\eqref{eq: dH strong AEP} $\implies$ Eq.~\eqref{eq: dmax strong AEP}: For any two quantum states $\rho,\sigma\in\md(A)$, any $\eps\in(0,1)$,  
it is known that~\cite[Proposition 4.1]{dupuis2014generalized},
\be
D_{\max}^{\eps}\left(\rho\|\sigma\right)\leq D_{H}^{1-\frac{1}{2}\eps^2}(\rho\|\sigma)+\log\left(\frac{2}{\eps^2}\right)\;.
\ee
Applying this to channel divergence gives
\be
d_{\max}^{\eps}\left(\mN\|\mM\right)\leq d_{H}^{1-\frac{1}{2}\eps^2}(\mN\|\mM)+\log\left(\frac{2}{\eps^2}\right).
\ee
Taking $n$ copies of $\mN$ and $\mM$ we get
\begin{align}
\frac1n d_{\max}^{\eps}\left(\mN^{\otimes n}\big\|\mM^{\otimes n}\right)\leq \frac1n d_{H}^{1-\frac{1}{2}\eps^2}\left(\mN^{\otimes n}\big\|\mM^{\otimes n}\right)+\frac1n\log\left(\frac{2}{\eps^2}\right)\;.\label{sandw}
\end{align}
Taking $\limsup_{n\to \infty}$ on both sides, we can see that Eq.~\eqref{eq: dH strong AEP} implies Eq.~\eqref{eq: dmax strong AEP}.

(ii) Eq.~\eqref{eq: dmax strong AEP} $\implies$ Eq.~\eqref{eq: dmax weak AEP}: trivial.

(iii) Eq.~\eqref{eq: dmax weak AEP} $\implies$ Eq.~\eqref{eq: dH strong AEP}: For any two quantum states $\rho,\sigma\in\md(A)$, any $\eps\in(0,1)$, and any $\eps'\in(0,1-\eps)$, it is known that~\cite[Theorem 11]{datta2013smooth},
\begin{align}\label{eq: Dh and Dmax 1}
D_{H}^{\eps'}(\rho\|\sigma)+\log\left(1-\eps-\eps'\right)\leq D_{\max}^{\eps}\left(\rho\|\sigma\right).
\end{align}
Applying this to channel divergence gives
\begin{align}\label{eq: Dh and Dmax}
d_{H}^{\eps'}(\mN\|\mM)+\log\left(1-\eps-\eps'\right)\leq d_{\max}^{\eps}\left(\mN\|\mM\right).
\end{align}
Taking $n$ copies of $\mN$ and $\mM$ we get
\begin{align}
\frac1n d_{H}^{\eps'}\left(\mN^{\otimes n}\big\|\mM^{\otimes n}\right)+\frac1n \log\left(1-\eps-\eps'\right) \leq \frac1n d_{\max}^{\eps}\left(\mN^{\otimes n}\big\|\mM^{\otimes n}\right).    
\end{align}
Taking $\limsup_{n\to \infty}$ and $\lim_{\ve \to 0}$, we get 
\begin{align}
\frac1n d_{H}^{\eps'}\left(\mN^{\otimes n}\big\|\mM^{\otimes n}\right) \leq d^{\reg}(\mN\|\mM),    
\end{align}
which implies Eq.~\eqref{eq: dH strong AEP}.
\end{proof} \vspace{0.2cm}

It is interesting to see that Eq.~\eqref{eq: dmax strong AEP} and Eq.~\eqref{eq: dmax weak AEP} are actually equivalent, despite the latter appearing much weaker than the former. As $D_H$ and $D_{\max}$ are the two extreme cases of one-shot quantum divergences, the above result would also apply to other intermediate divergences such as the information spectrum relative entropies~\cite{tomamichel2013hierarchy,datta2014second} and the recently introduced smoothed sandwiched \Renyi divergence~\cite{rubboli2022fundamental}.

\vspace{0.2cm}

Besides the above AEPs, the strong converse properties also relate to the continuity of the regularized (amortized) sandwiched \Renyi channel divergence at $\alpha = 1$.

\begin{theorem}\label{thm: sandwiched channel continuity regularized}
Let $\mN,\mM\in\cptp(A\to B)$ be two quantum channels. Then the following continuitity
\begin{align}\label{eq: sandwiched continuitity}
  \lim_{\alpha \to 1^+} {\td}_{\alpha}^{\reg}(\cN\|\cM) = d^{\reg}(\cN\|\cM).
\end{align}
implies that these channels exhibit the exponentially strong converse property as defined in Definition~\ref{def: exponentially strong converse property}. Conversely, if the exponentially strong converse property as defined in Definition~\ref{def: exponentially strong converse property} holds true for channels $\cI \ox \cN$ and $\cI \ox \cM$ with the identity channel $\cI \in \cptp(A\to A)$, then 
\begin{align}\label{eq: sandwiched continuitity stablized}
  \lim_{\alpha \to 1^+} {\td}_{\alpha}^{\reg}(\cI \ox \cN\|\cI \ox \cM) = d^{\reg}(\cI \ox \cN\|\cI \ox \cM).
\end{align}
\end{theorem}

\begin{proof}
Note that by the monotonicity of sandwiched \Renyi divergence with respect to $\alpha$, the limits in the above statement can be replaced with $\inf_{\alpha > 1}$. Suppose the continuity in Eq.~\eqref{eq: sandwiched continuitity} holds. Recall that~\cite[Lemma 5]{cooney2016strong} for any $\alpha \in (1,+\infty)$ and $\varepsilon\in(0,1)$, it holds that
\be
D_H^{\ve}(\rho\|\sigma)\leq \tD_{\alpha}(\rho\|\sigma) + \frac{\alpha}{\alpha-1}\log\frac{1}{1-\varepsilon}.
\ee
Applying this to the discimination of $n$ copies of the channels, it implies 
\begin{align}
  -\frac{1}{n} \log \beta_n \leq \frac{1}{n} \td_\alpha(\cN^{\ox n}\|\cM^{\ox n}) + \frac{1}{n} \frac{\alpha}{\alpha-1}\log\frac{1}{1-\alpha_n}.
\end{align}
If $\liminf_{n\to \infty}-\frac{1}{n} \log \beta_n := r > d^\reg(\cN\|\cM)$, then there exists a subsequence $n_k$ and $\delta > 0$ such that $-\frac{1}{n_k} \log \beta_{n_k} > r - \delta > d^\reg(\cN\|\cM)$. Let $r':= r -\delta$. We have 
\begin{align}
  r' < \frac{1}{n_k} \td_\alpha(\cN^{\ox n_k}\|\cM^{\ox n_k}) + \frac{1}{n_k} \frac{\alpha}{\alpha-1}\log\frac{1}{1-\alpha_{n_k}}.
\end{align}
Since $\frac{1}{n_k} \td_\alpha(\cN^{\ox n_k}\|\cM^{\ox n_k}) \leq \td_\alpha^\reg(\cN\|\cM)$, we have 
\begin{align}
r' < \td_\alpha^\reg(\cN\|\cM) + \frac{1}{n_k} \frac{\alpha}{\alpha-1}\log\frac{1}{1-\alpha_{n_k}},
\end{align}
which is equivalent to 
\begin{align}
  1-\alpha_{n_k} < 2^{-\frac{\alpha-1}{\alpha}n_k\left(r'-\td_\alpha^\reg(\cN\|\cM)\right)}.
\end{align}
Since $r'>d^\reg(\cN\|\cM)=\inf_{\alpha>1} {\td}_{\alpha}^{\reg}(\cN\|\cM)$ by assumption, there exists $\alpha > 1$ such that $r' > {\td}_{\alpha}^{\reg}(\cN\|\cM)$. We can choose $c:= (\alpha-1)/\alpha (r' - {\td}_{\alpha}^{\reg}(\cN\|\cM))$. This implies the expoentially strong converse property in Definition~\ref{def: exponentially strong converse property}.

We now prove the second statement. Suppose the expoentially strong converse property in Definition~\ref{def: exponentially strong converse property} holds true for $\cI \ox \cN$ and $\cI \ox \cM$.
For any $\alpha > 1$, we have $\tD_{\alpha}^{\reg}(\cN\|\cM) \geq D^{\reg}(\cN\|\cM)$. Thus it is clear that $\inf_{\alpha > 1} \tD_{\alpha}^{\reg}(\cN\|\cM) \geq D^{\reg}(\cN\|\cM)$. We now prove the other direction. If $\inf_{\alpha > 1} \tD_{\alpha}^{\reg}(\cN\|\cM) > D^{\reg}(\cN\|\cM)$, we can find  $r \in \RR $ such that $\inf_{\alpha > 1} \tD_{\alpha}^{\reg}(\cN\|\cM) > r > D^{\reg}(\cN\|\cM)$. Consider a sequence of coherent channel discrimination strategies such that the Type-II error converges at an exponential rate $r$. By the result~\cite[Theorem 5.5 and Remark 5.6]{fawzi2021defining}, we know that the strong converse exponent is zero since $r < \inf_{\alpha > 1} \tD_{\alpha}^{\reg}(\cN\|\cM)$, which means the Type-I error does not exponentially converge to one. However, by Definition~\ref{def: exponentially strong converse property}, the condition $r > D^{\reg}(\cN\|\cM)$ implies that the Type-I error has to converge exponentially to one, which forms a contradiction and concludes that $\inf_{\alpha > 1} \tD_{\alpha}^{\reg}(\cN\|\cM) \leq D^{\reg}(\cN\|\cM)$. This proves Eq.~\eqref{eq: sandwiched continuitity stablized}. 
\end{proof} \vspace{0.2cm}

\vspace{0.2cm}
Note that the second statement above holds for the stabilized channel divergence, as its proof relies on the results in~\cite[Theorem 5.5 and Remark 5.6]{fawzi2021defining}. It would be interesting to determine whether this result also holds for the unstabilized channel divergence in general.

\begin{corollary}\label{thm: sandwiched channel continuity amortized}
Let $\cN,\cM \in \cptp(A\to B)$ be two quantum channels. The exponentially strong converse property as defined in Definition~\ref{def: exponentially strong converse property} holds true for channels $\cI \ox \cN$ and $\cI \ox \cM$ with the identity channel $\cI \in \cptp(A\to A)$ if and only if one of the following continuities hold  
\begin{align}
  \lim_{\alpha \to 1^+} \tD_{\alpha}^{\reg}(\cN\|\cM) & = D^{\reg}(\cN\|\cM),\\
    \lim_{\alpha \to 1^+} \tD_{\alpha}^{A}(\cN\|\cM) & = D^{A}(\cN\|\cM).
\end{align}
\end{corollary}

\begin{proof}
The first equation follows from Theorem~\ref{thm: sandwiched channel continuity regularized}. The second equation follows from the existing results $\tD_{\alpha}^{\reg}(\cN\|\cM) = \tD_{\alpha}^{A}(\cN\|\cM)$~\cite[Theorem 5.4]{fawzi2021defining} and $D^{\reg}(\cN\|\cM) = D^{A}(\cN\|\cM)$ \cite[Corollary 3]{fang2019chain}.
\end{proof} \vspace{0.2cm}

Note that the exponentially strong converse property in Definition~\ref{def: exponentially strong converse property} is defined for coherent strategies. Here, we demonstrate that it is equivalent to the exponentially strong converse property under sequential strategies. This is quite remarkable, as sequential strategies can be significantly more general than coherent strategies.

\begin{theorem}\label{coro: strong converse of channel hypothesis testing sequential strategies}
Let $\cN,\cM \in \cptp(A\to B)$ be two quantum channels and $\cI \in \cptp(A \to A)$ be the identity channel. The expoentially strong converse property in Definition~\ref{def: exponentially strong converse property} holds true under coherent strategies for channels $\cI \ox \cN$ and $\cI \ox \cM$ if and only if it holds true under squential strategies. 
\end{theorem}
\begin{proof}
By the result in~\cite[Proposition 20]{Berta2018}, for any sequential strategies and $\alpha > 1$ it holds that
\begin{align}\label{eq: strong converse of channel hypothesis testing sequential strategies tmp1}
   -\frac{1}{n} \log(1-\alpha_n) \geq  \frac{\alpha-1}{\alpha} \left(-\frac{1}{n} \log \beta_n - \tD_\alpha^A(\cN\|\cM)\right).
\end{align}
By the expoentially strong converse property under sequential strategies, we assume the relation that $\liminf_{n\to \infty} -\frac{1}{n} \beta_n :=r > D^\reg(\cN\|\cM) = D^A(\cN\|\cM)$ where the second equality follows by~\cite[Corollary 3]{fang2019chain}. This implies that there exists $\delta > 0$ and a subsequence $\beta_{n_k} $ such that
$-\frac{1}{n_k} \log \beta_{n_k} \geq r - \delta > D^A(\cN\|\cM)$ 
for sufficiently large $n_k$. By Corollary~\ref{thm: sandwiched channel continuity amortized}, the expoentially strong converse property in Definition~\ref{def: exponentially strong converse property} is equivalent to the continuity of the amortized channel divergence $\lim_{\alpha \to 1^+} \tD_{\alpha}^A(\cN\|\cM)= D^{A}(\cN\|\cM)$. 
As $r > D^A(\cN\|\cM)$, there exists $\alpha_0 > 1$ such that $r - \delta > \tD^A_{\alpha_0}(\cN\|\cM)$. Then we have
\begin{align}
   -\frac{1}{n_k} \log \beta_{n_k} - \tD^A_{\alpha_0}(\cN\|\cM) \geq r - \delta - \tD^A_{\alpha_0}(\cN\|\cM) =: b > 0.
\end{align}
Taking this into Eq.~\eqref{eq: strong converse of channel hypothesis testing sequential strategies tmp1}, we get
\begin{align}
   -\frac{1}{n_k} \log(1-\alpha_{n_k}) \geq \frac{\alpha_0 - 1}{\alpha_0} b =: c > 0,
\end{align}
which is equivalent to $1-\alpha_{n_k} \leq 2^{-cn_k}$. This establishes the exponentially strong converse property under sequential strategies. Conversely, since any coherent strategy is a specific case of a sequential strategy, the strong converse property under sequential strategies also implies the property under coherent strategies.
\end{proof} \vspace{0.2cm}

\section{Quantum channel discrimination in different regimes}
\label{sec: Quantum channel discrimination in different regimes}

The task of channel discrimination aims to distinguish a quantum channel from the other under a given type of strategy. A standard approach for discrimination is to perform hypothesis testing and make a decision based on the testing result. However, two types of error (Type-I error and Type-II error) arise. In the same spirit of state discrimination, one can study the asymptotic behavior of these errors in different operational regimes (see Figure~\ref{fig:discrimination regimes}), particularly, (I) error exponent regime that studies the exponent of the exponential convergence of the Type-I error given that the Type-II error exponentially decays ; (II) Stein exponent regime that studies the exponent of the exponential decay of the Type-II error given that the Type-I error is within a constant threshold; (III) strong converse exponent regime that studies the exponent of the exponential convergence of the Type-I error given that the Type-II error exponentially decays.

\begin{figure}[htb]
    \centering
    \includegraphics[width=0.8\textwidth]{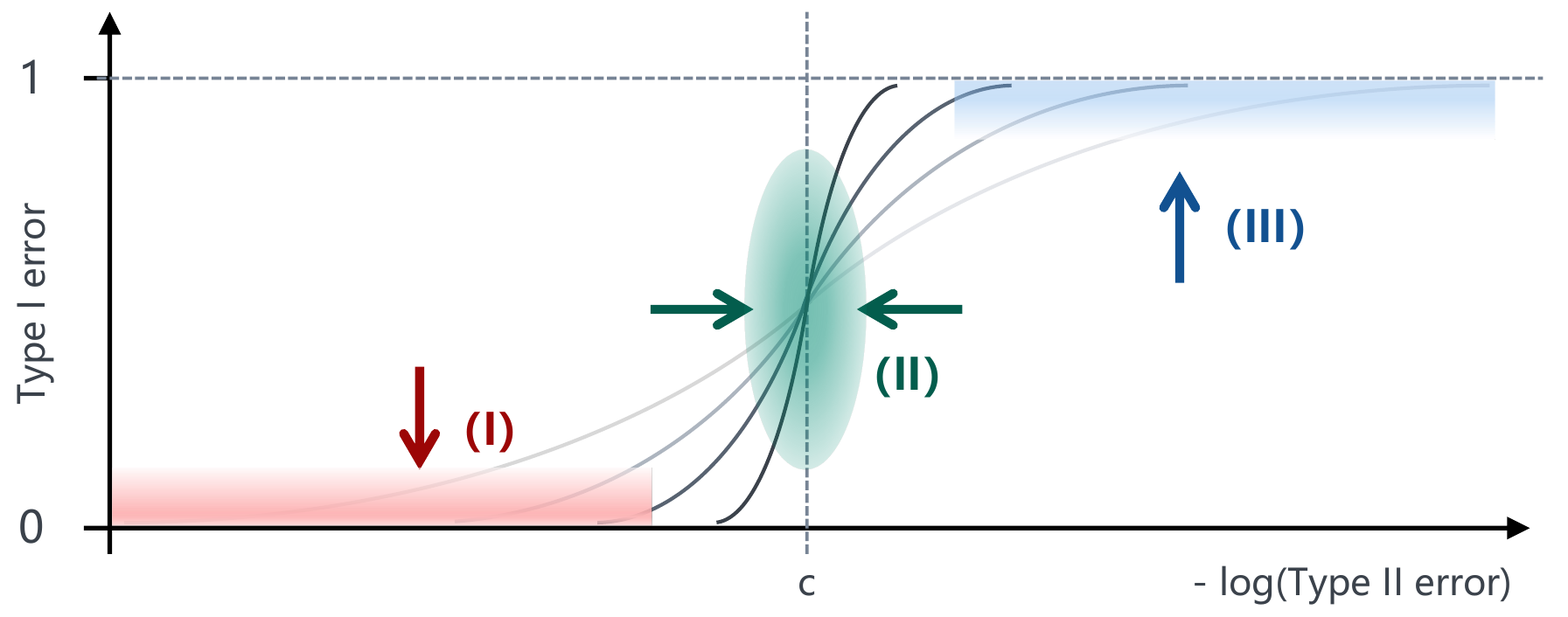}
    \caption{Illustration depicting different regimes of quantum channel discrimination. Each curve represents the tradeoff between the Type-I and Type-II errors for varying block lengths, with darker lines corresponding to longer block lengths. (I) represents the error exponent regime, (II) represents the Stein exponent regime, and (III) represents the strong converse exponent regime.}
    \label{fig:discrimination regimes}
\end{figure}

Quantum state discrimination in different operational regimes has been well-studied. In particular, there is a nice correspondence between the regime studied and the quantum divergence to use. More precisely, the Stein exponent is given by the Umegaki relative entropy~\cite{hiai1991proper,ogawa2000strong}, the strong converse exponent is determined by the sandwiched \Renyi divergence~\cite{mosonyi2015quantum}, and the error exponent is determined by the Petz \Renyi divergence~\cite{hayashi2007error,nagaoka2006converse,audenaert2008asymptotic}. However, when it comes to channel discrimination, the situation becomes much involved due to the diverse range of discrimination strategies and different extensions of channel divergence.  
 
In this section we study the interplay between the strategies of channel discrimination (e.g., sequential, coherent, product), the operational regimes (e.g., error exponent, Stein exponent, strong converse exponent), and three variants of channel divergences (e.g., Petz, Umegaki, sandwiched). We find a nice correspondence which shows that the proper divergences to use (Petz, Umegaki, sandwiched) are determined by the operational regime of interest, while the types of channel extension (one-shot, regularized, amortized) are determined by the discrimination strategies. Our results contribute towards a complete picture of channel discrimination in a unified framework.

\subsection{Stein exponent}

In this subsection we consider minimizing the Type-II error probability, under the constraint that the Type-I error probability does not exceed a constant threshold $\ve \in (0,1)$. We characterize the exact exponent, named \emph{Stein exponent}, with which the Type-II error exponentially decays.

\begin{definition}[Stein exponent]\label{def: stein exponent}
Let $\cN,\cM \in \cptp(A\to B)$ be two quantum channels and $\ve \in (0,1)$ be a fixed error. The Stein exponents of quantum channel discrimination by the strategy class $\Omega \in \{\pro,\coh,\seq\}$ without quantum memory assistance are defined by
\begin{align}\label{eq: stein exponent definition}
\ST_{\Omega,\sup}(\ve|\cN\|\cM)&:= \limsup_{n\to \infty} \frac1n d_{H,\Omega}^\ve(\cN^{\ox n}\|\cM^{\ox n}),\\
\ST_{\Omega,\inf}(\ve|\cN\|\cM)&:= \liminf_{n\to \infty} \frac1n d_{H,\Omega}^\ve(\cN^{\ox n}\|\cM^{\ox n}) ,
\end{align}
where 
\begin{align}\label{eq: DH omega}
d_{H,\Omega}^\ve(\cN^{\ox n}\|\cM^{\ox n}) := \sup_{(S_n,\Pi_n)\in \Omega} \left\{ - \frac1n \log \beta_n(S_n,\Pi_n): \alpha_n(S_n,\Pi_n) \leq \ve \right\},
\end{align}
the supremum is taken over all possible strategies $(S_n,\Pi_n) \in \Omega$ satisfying the condition and the type-I and type-II errors are defined in~Eqs.~\eqref{eq: definition of two kinds of errors 1} and~\eqref{eq: definition of two kinds of errors 2}, respectively.
\end{definition}

The non-asymptotic quantity in~\eqref{eq: DH omega} can also be written as a notion of hypothesis testing relative entropy between the testing states, 
\begin{align}
d_{H,\Omega}^\ve(\cN^{\ox n}\|\cM^{\ox n}) = \sup_{S_n \in \Omega} D_H^\ve(\rho_n(S_n)\|\sigma_n(S_n)),
\end{align}
where the hypothesis testing relative entropy on the r.h.s. is between two quantum states which is defined in~\eqref{eq: DH} and the supremum is taken over all strategies $S_n \in \Omega$ that generate the testing states $\rho_n(S_n)$ and $\sigma_n(S_n)$. More explicitly, when $\Omega = \pro$, we have $\rho_n(S_n) = \cN^{\ox n}(\Motimes_{i=1}^n \varphi_i)$, $\sigma_n(S_n) = \cM^{\ox n}(\Motimes_{i=1}^n \varphi_i)$ and the supremum is taken over all $\varphi_i \in \md(R_iA_i)$. When $\Omega = \coh$, we have $\rho_n(S_n) = \cN^{\ox n}(\psi_n)$, $\sigma_n(S_n) = \cM^{\ox n}(\psi_n)$ and the supremum is taken over all $\psi_n \in \md(RA^n)$. When $\Omega = \seq$, we have $\rho_n(S_n) = \cN\circ \cP_{n-1} \circ \cdots \circ \cP_2 \circ \cN \circ \cP_1 \circ \cN (\psi_n)$, $\sigma_n(S_n) = \cM\circ \cP_{n-1} \circ \cdots \circ \cP_2 \circ \cM \circ \cP_1 \circ \cM (\psi_n)$ and the supremum is taken over all $\psi_n \in \md(R_1A_1)$ and $\cP_i \in \cptp(R_iB_i \to R_{i+1}A_{i+1})$.

\begin{theorem}[Product strategy]\label{thm: stein exponent prodcut}
  Let $\mN,\mM\in\cptp(A\to B)$ be two quantum channels and $\varepsilon\in(0,1)$ be a fixed error. Then it holds that
  \begin{align}
    \ST_{\pro, \sup}(\ve|\cN\|\cM) = \ST_{\pro, \inf}(\ve|\cN\|\cM) =  d(\cN\|\cM).
  \end{align}
\end{theorem}

\begin{proof}
It suffices to show that 
\begin{align}
\lim_{n\to \infty} \frac1n d_{H,\pro}^\ve(\cN^{\ox n}\|\cM^{\ox n}) = d(\cN\|\cM).
\end{align} 
For the achievable part, let $\varphi \in \md(A)$ be an optimal input state for $d(\cN\|\cM)$, i.e., $D(\cN(\varphi)\|\cM(\varphi)) = d(\cN\|\cM)$.  Using $\varphi^{\ox n}$ as the input state in the product strategy, we have
\begin{align}
\ST_{\pro, \inf}(\ve|\cN\|\cM) \geq \liminf_{n\to \infty} \frac1n  d_{H}^\ve([\cN(\varphi)]^{\ox n}\|[\cM(\varphi)]^{\ox n}) = D(\cN(\varphi)\|\cM(\varphi)) = d(\cN\|\cM),
\end{align}
where the first equality follows from the quantum Stein's lemma~\cite{hiai1991proper,ogawa2000strong} and the second equality follows from the optimality assumption of $\varphi$. For the converse part, consider any input states $\Motimes_{i=1}^n \varphi_i$ with $\varphi_i \in \md(A_i)$ and $\alpha > 1$ we have 
\begin{align}
\frac1n D_H^\ve\left(\Motimes_{i=1}^n \cN(\varphi_i)\Big\|\Motimes_{i=1}^n \cM(\varphi_i)\right) & \leq \frac1n \tD_\alpha\left(\Motimes_{i=1}^n \cN(\varphi_i)\Big\|\Motimes_{i=1}^n \cM(\varphi_i)\right) + \frac1n  \frac{\alpha}{\alpha-1}\log\frac{1}{1-\varepsilon}\\
& = \frac1n  \sum_{i=1}^n \tD_\alpha\left(\cN(\varphi_i)\big\|\cM(\varphi_i)\right)+ \frac1n  \frac{\alpha}{\alpha-1}\log\frac{1}{1-\varepsilon}\\
& \leq \td_\alpha(\cN\|\cM)+ \frac1n  \frac{\alpha}{\alpha-1}\log\frac{1}{1-\varepsilon},
\end{align}
where the first inequality follows from Eq.~\eqref{eq: DH and sandwiched}, the first equality follows from the additivity of sandwiched \Renyi divergence under tensor product states, the second inequality follows from the definition of channel divergence. Taking the supremum of all input states $\Motimes_{i=1}^n \varphi_i$ and taking the limit of $n \to \infty$, we have
\begin{align}
  \ST_{\pro, \sup}(\ve|\cN\|\cM) = \limsup_{n\to \infty} \frac{1}{n} d_{H,\pro}^\ve(\cN^{\ox n}\|\cM^{\ox n}) \leq \td_\alpha(\cN\|\cM).
\end{align}
Finally, taking $\alpha \to 1$ and applying Lemma~\ref{lem: one shot continuity} we have the converse part.
\end{proof} \vspace{0.2cm}

\vspace{0.2cm}

Note that we can actually extend the input choices of product strategy to convex combination of tensor product states $\sum_{j=1}^m p_j (\Motimes_{i=1}^n \varphi_{i,j})$. In this case Theorem~\ref{thm: stein exponent prodcut} still holds by adding an extra step in the proof of the converse part and using the joint quasi-convexity of the sandwiched \Renyi divergence (e.g.~\cite[Corollary 3.16]{mosonyi2015quantum}). This indicates that shared randomness between the input states for each uses of the channel will not help to get a faster convergence rate of the Type-II error for channel discrimination.

\begin{theorem}[Coherent strategy]\label{thm: stein exponent coherent}
Let $\mN,\mM\in\cptp(A\to B)$ be two quantum channels and $\varepsilon\in(0,1)$ be a fixed error. If these channels exhibit the strong converse property as defined in Definition~\ref{def: strong converse property}, then it implies that
\begin{align}
\ST_{\coh, \sup}(\ve|\cN\|\cM) = \ST_{\coh, \inf}(\ve|\cN\|\cM)  = d^\reg(\cN\|\cM).
\end{align}
\end{theorem}

\begin{proof}
The assertion is a combination of Eq.~\eqref{eq: dH lower bound} (achievability) and a restatement of Theorem~\ref{thm: channel stein lemma} (converse) by noting that
\begin{align}
    d_{H,\coh}^{\varepsilon}(\mN^{\otimes n}\big\|\mM^{\otimes n}) = d_{H}^{\varepsilon}(\mN^{\otimes n}\big\|\mM^{\otimes n}),
\end{align}
where the l.h.s. is the operational definition and the r.h.s. is the mathematical definition.
\end{proof} \vspace{0.2cm}

\begin{theorem}[Sequential strategy]\label{thm: stein exponent sequential}
  Let $\cN,\cM \in \cptp(A\to B)$ be two quantum channels. Let $\cI \in \cptp(A \to A)$ be the identity channel. Then if the exponentially strong converse property, as defined in Definition~\ref{def: exponentially strong converse property}, holds for the channels $\cI \ox \cN$ and $\cI \ox \cM$, this implies that
\begin{align}
\ST_{\seq,\sup}(\ve|\cN\|\cM) = \ST_{\seq,\inf}(\ve|\cN\|\cM) =  D^A(\cN\|\cM).
\end{align}
\end{theorem}

\begin{proof}
By definition it is clear that $\ST_{\seq,\inf}(\ve|\cN\|\cM)$ is monotone increasing in $\ve$. Thus for any fixed $\ve \in (0,1)$ we have 
\begin{align}\label{eq: seq tmp1}
\ST_{\seq,\inf}(\ve|\cN\|\cM) \geq \lim_{\ve \to 0} \ST_{\seq,\inf}(\ve|\cN\|\cM) = D^A(\cN\|\cM) ,   
\end{align} 
where the equality follows from~\cite[Theorem 6]{WW2019}. 

Next we prove the converse part. For any $\psi_n\in\md(RA^n)$, $\cP_i \in  \cptp(R_iB_i \to R_{i+1}A_{i+1})$, denote 
\begin{align}
\rho_n & = \cN\circ \cP_{n-1} \circ \cdots \circ \cP_2 \circ \cN \circ \cP_1 \circ \cN (\psi_n)\\
\sigma_n & = \cM\circ \cP_{n-1} \circ \cdots \circ \cP_2 \circ \cM \circ \cP_1 \circ \cM (\psi_n).
\end{align}
Due to Eq.~\eqref{eq: DH and sandwiched}, it holds for any $\alpha > 1$ that
\begin{align}
\frac{1}{n}D_H^{\varepsilon}(\rho_n\|\sigma_n)\leq \frac{1}{n}\tD_\alpha(\rho_n\|\sigma_n)+\frac{1}{n}\frac{\alpha}{\alpha-1}\log\frac{1}{1-\varepsilon}.
\end{align}
Note that for any quantum state $\rho,\sigma$ and quantum channels $\cE,\cF$, we have by definition 
\begin{align}
\tD_\alpha(\cE(\rho)\|\cF(\sigma)) \leq \tD_\alpha^A(\cE\|\cF) + \tD_\alpha(\rho\|\sigma).  
\end{align} 
By using this relation and the data-processing inequality of $\tD_\alpha$ iteratively, we have $\tD_\alpha(\rho_n\|\sigma_n) \leq n \tD_\alpha^A(\cN\|\cM)$.
This gives
\begin{align}
\frac{1}{n}D_H^{\varepsilon}(\rho_n\|\sigma_n)\leq \tD_\alpha^A(\cN\|\cM)+\frac{1}{n}\frac{\alpha}{\alpha-1}\log\frac{1}{1-\varepsilon}.
\end{align}
Taking on both sides the supremum over all sequential strategies following by the limit $n\to \infty$ gives
\begin{align}
\ST_{\seq,\sup}(\ve|\cN\|\cM) =\limsup_{n\to\infty}\frac{1}{n}D_{H,\seq}^{\varepsilon}\left(\mN^{\otimes n}\big\|\mM^{\otimes n}\right)\leq \tD^{A}_\alpha(\mN\|\mM)\;.
\end{align}
Since the above inequality holds for all $\alpha>1$, by taking $\alpha\to 1^+$ and using Corollary~\ref{thm: sandwiched channel continuity amortized} we have
\begin{align}\label{eq: seq tmp2}
   \ST_{\seq,\sup}(\ve|\cN\|\cM)\leq D^{A}(\mN\|\mM).
\end{align}
Combining Eqs.~\eqref{eq: seq tmp1} and~\eqref{eq: seq tmp2}, we have the complete proof.
\end{proof} \vspace{0.2cm}

\vspace{0.2cm}

Note that Theorem~\ref{thm: stein exponent coherent} and~\ref{thm: stein exponent sequential} have been proved in~\cite[Theorem 3 and Theorem 6]{WW2019} for vanishing $\ve$. But the above results are stronger as they hold for any fixed $\ve \in (0,1)$ without the need to take ${\ve \to 0}$.

\subsection{Strong converse exponent}

In the task of state discrimination, the strong converse exponent is defined by
\begin{align}
\SC(r|\rho\|\sigma):=\inf_{\{\Pi_n\}}\left\{-\liminf_{n\to +\infty} \frac{1}{n} \log \tr \rho^{\ox n} \Pi_n: \limsup_{n\to +\infty} \frac{1}{n} \log \tr \sigma^{\ox n}\Pi_n \leq -r\right\},
\end{align}
where the infimum is taken over all possible sequences of quantum tests $\{\Pi_n\}_{n\in \mathbb{N}}$ satisfying the condition. It has been shown in~\cite[Theorem 4.10]{mosonyi2015quantum} that this exponent is precisely characterized by:
\begin{align}\label{eq: strong converse exponenet state version}
\SC(r|\rho\|\sigma) = \sup_{\alpha > 1} \frac{\alpha-1}{\alpha} \left[r - \widetilde D_{\alpha}(\rho\|\sigma)\right].
\end{align}
We aim to extend this result to the channel case. 

Let us start by defining the strong converse exponent of channel discrimination.

\begin{definition}[Strong converse exponent] Let $\cN,\cM \in \cptp(A\to B)$ and $r>0$. The strong converse exponents of channel discrimination by the strategy class $\Omega \in \{\pro,\coh,\seq\}$ without quantum memory assistance are defined by
\begin{align}\label{eq: strong converse exponent definition}
\SC_{\Omega}(r|\mN\|\mM):= \inf_{(S_n,\Pi_n) \in \Omega} \left\{- \liminf_{n\to +\infty}\frac{1}{n} \log (1-\alpha_n(S_n,\Pi_n)): \limsup_{n\to +\infty} \frac{1}{n} \log \beta_n(S_n,\Pi_n) \leq -r \right\},
\end{align}
where the infimum is taken over all possible strategies $(S_n,\Pi_n) \in \Omega$ satisfying the condition.
\end{definition}

\begin{theorem}[Product strategy]\label{thm: strong converse exponent product}
Let $\cN,\cM \in \cptp(A\to B)$ and $r>0$. Then it holds that
  \begin{align}
\SC_{\pro}(r|\mN\|\mM) = \sup_{\alpha > 1} \frac{\alpha-1}{\alpha} \left[r - \widetilde d_{\alpha}(\mN\|\mM)\right].
  \end{align}
\end{theorem}

\begin{proof}
We first prove the converse part which closely follows the proof of its state analog in~\cite[Lemma 4.7]{mosonyi2015quantum}. For any product strategy $(\{\varphi_i\}_{i=1}^n, \Pi_n)$ with input states  $\varphi_i \in \md(A_i)$ and measurement opeartor $0 \leq \Pi_n \leq I$. Let $\rho_n := \mN^{\otimes n}(\Motimes_{i=1}^n \varphi_i)$, $\sigma_n := \mM^{\otimes n}(\Motimes_{i=1}^n \varphi_i)$ be the output states and $p_n:=(\tr \rho_n \Pi_n, \tr \rho_n (I-\Pi_n))$ and $q_n:= (\tr \sigma_n \Pi_n, \tr \sigma_n (I-\Pi_n))$ be the post-measurement states. Then the Type-I error is $\alpha_n = \tr[(I-\Pi_n) \rho_n]$ and the Type-II error is $\beta_n = \tr[\Pi_n \sigma_n]$.  By definition it suffices to consider sequences $(\{\varphi_i\}_{i=1}^n, \Pi_n)$ such that $\limsup_{n\to +\infty} \frac{1}{n} \log \beta_n \leq -r$.  From the data-processing of the sandwiched \Renyi divergence, we have for any $\alpha > 1$ that 
\begin{align}
  \widetilde D_{\alpha}(\rho_n\|\sigma_n)  & \geq \widetilde D_{\alpha}(p_n\|q_n)\notag\\
  & \geq \frac{1}{\alpha - 1} \log \left[(\tr \rho_n \Pi_n)^{\alpha} (\tr \sigma_n \Pi_n)^{1-\alpha}\right]= \frac{\alpha}{\alpha - 1} \log (1-\alpha_n) - \log \beta_n.
\end{align}
This can be equivalently written as
\begin{align}
  -\frac{1}{n} \log(1-\alpha_n) \geq \frac{\alpha-1}{\alpha} \left[- \frac{1}{n} \log \beta_n - \frac{1}{n} \widetilde D_{\alpha}(\rho_n\|\sigma_n)\right].
\end{align}
By the assumption of $(\{\varphi_i\}_{i=1}^n, \Pi_n)$ and taking $\limsup_{n\to \infty}$ on both sides, we have 
\begin{align}
  - \liminf_{n\to +\infty} \frac{1}{n} \log(1-\alpha_n) \geq \frac{\alpha-1}{\alpha} \left[r - \liminf_{n\to +\infty}\frac{1}{n} \widetilde D_{\alpha}(\rho_n\|\sigma_n)\right].
\end{align}
By the additivity of sandwiched \Renyi divergence under tensor product states and the definition of channel divergence, we have $\tD_{\alpha}(\rho_n\|\sigma_n) = \sum_{i=1}^n \tD_{\alpha}(\cN(\varphi_i)\|\cM(\varphi_i))\leq n \td_\alpha(\cN\|\cM)$. Thus
\begin{align}\label{eq: product strong converse exponent proof tmp1}
  - \liminf_{n\to +\infty} \frac{1}{n} \log(1-\alpha_n) \geq \frac{\alpha-1}{\alpha} \left[r - \td_{\alpha}(\cN\|\cM)\right].
\end{align}
Finally taking the infimum over all product strategies and the supremum over all $\alpha > 1$ on both sides, we can conclude the converse part
\begin{align}\label{eq: sc tmp1}
  \SC_{\pro}\left(r|\mN \|\mM\right) \geq \sup_{\alpha > 1} \frac{\alpha-1}{\alpha} \left[r - \widetilde d_{\alpha}(\mN\|\mM)\right].
\end{align}

We then proceed to show the achievable part. Let $\varphi \in \md(A)$ be an optimal quantum state such that $\widetilde d_{\alpha}(\mN\|\mM) = \widetilde D_{\alpha}(\mN(\varphi)\|\mM(\varphi))$. Consider the task of distinguishing quantum states $\mN(\varphi)$ and $\mM(\varphi)$. Suppose the optimal test in $\SC(r|\mN(\varphi)\|\mM(\varphi))$ is given by the sequence $\{\Pi_{n}\}_{n\in \mathbb{N}}$. Then by the quantum converse Hoeffiding theorem (see~\eqref{eq: strong converse exponenet state version}) we have
\begin{align}
   \limsup_{n\to +\infty} \frac{1}{n} \log \tr [\mM(\varphi)]^{\ox n}\Pi_{n}  & \leq -r \qquad \text{and}\\
  - \liminf_{n\to +\infty} \frac{1}{n} \log \tr [\mN(\varphi)]^{\ox n} \Pi_{n} &  = \sup_{\alpha > 1} \frac{\alpha-1}{\alpha} \left[r - \widetilde D_{\alpha}(\cN(\varphi)\|\cM(\varphi))\right].\label{eq: product strong converse exponent tmp1}
\end{align}
Note that $(\{\varphi\}_{i=1}^n,\Pi_{n})$ is a product stategy for the task of channel discrimination between $\mN^{\ox n}$ and $\mM^{\ox n}$. We have
\begin{align}
\SC_{\pro}\left(r|\mN \|\mM\right) & \leq - \liminf_{n\to +\infty} \frac{1}{n} \log \tr \mN^{\ox n}(\varphi^{\ox n})\Pi_{n}\\
  & = \sup_{\alpha > 1} \frac{\alpha-1}{\alpha} \left[r - \widetilde D_{\alpha}(\mN(\varphi)\|\mM(\varphi))\right]\\
  & = \sup_{\alpha > 1} \frac{\alpha-1}{\alpha} \left[r - \widetilde d_{\alpha}(\mN\|\mM)\right]\label{eq: sc tmp2}
\end{align}
where the first equality follows from~\eqref{eq: product strong converse exponent tmp1}, the second equality follows from the optimality assumption of $\varphi$. Combining Eqs.~\eqref{eq: sc tmp1} and~\eqref{eq: sc tmp2}, we have the complete proof.
\end{proof} \vspace{0.2cm}

\vspace{0.2cm}
Note here that one can extend the input choices of product strategy to convex combination of tensor product states $\sum_{j=1}^m p_j (\Motimes_{i=1}^n \varphi_{i,j})$. In this case Theorem~\ref{thm: strong converse exponent product} still holds by adding an additional step in the proof of the converse part and using the joint quasi-convexity of the sandwiched \Renyi divergence (e.g.~\cite[Corollary 3.16]{mosonyi2015quantum}). This indicates that shared randomness between the input states for each use of the channel will provide no advantage in reducing the convergence rate of the Type-I error.

\begin{remark}
The strong converse exponents under coherent and sequential strategies were established in~\cite[Theorem 5.5]{fawzi2021defining}. However, regarding the exact threshold for exponential convergence, their result only identifies the threshold as $\inf_{\alpha > 1} \tD^\reg_\alpha(\cN \| \cM)$. The continuity result in Theorem~\ref{thm: sandwiched channel continuity regularized} could fully determine this threshold as $D^\reg(\cN \| \cM)$ if the strong converse property can be proven.
\end{remark}

\subsection{Error exponent}

In the task of state discrimination, the error exponent is defined by
\begin{align}
\ER(r|\rho\|\sigma):=\sup_{\{\Pi_n\}}\left\{-\limsup_{n\to +\infty} \frac{1}{n} \log \tr [(I-\Pi_n)\rho^{\ox n} ]: \limsup_{n\to +\infty} \frac{1}{n} \log \tr [\Pi_n\sigma^{\ox n}] \leq -r\right\},
\end{align}
where the supremum is taken over all possible sequences of quantum tests $\{\Pi_n\}_{n\in \mathbb{N}}$ satisfying the condition. It has been shown in~\cite{hayashi2007error,nagaoka2006converse,audenaert2008asymptotic} that the error exponent is precisely given by:
\begin{align}\label{eq: error exponenet state version}
\ER(r|\rho\|\sigma) = \sup_{0<\alpha <1} \frac{\alpha-1}{\alpha} \left[r - \widebar D_\alpha(\rho\|\sigma)\right].
\end{align}
We aim to extend this result to the channel case.

\begin{definition}[Error exponent]Let $\cN,\cM \in \cptp(A\to B)$ and $r > 0$. The error exponents of quantum channel discrimination by the strategy class $\Omega \in \{\pro,\coh,\seq\}$ without quantum memory assistance are defined by
\begin{align}\label{eq: error exponent definition}
\ER_{\Omega}(r|\mN\|\mM):= \sup_{(S_n,\Pi_n) \in \Omega} \left\{- \limsup_{n\to +\infty}\frac{1}{n} \log \alpha_n(S_n,\Pi_n): \limsup_{n\to +\infty} \frac{1}{n} \log \beta_n(S_n,\Pi_n) \leq -r \right\},
\end{align}
where the supremum is taken over all possible  strategies $(S_n,\Pi_n) \in \Omega$ satisfying the condition.
\end{definition}

\begin{theorem}[Product strategy]\label{thm: error exponent product}
Let $\mN, \mM \in \cptp(A\to B)$ and $r > 0$. Then it holds that
  \begin{align}
\ER_{\pro}(r|\mN\|\mM) \geq \sup_{0<\alpha <1} \frac{\alpha-1}{\alpha} \left[r - \widebar d_\alpha(\cN\|\cM)\right].
  \end{align}
\end{theorem}

\begin{proof}
Let $\varphi \in \md(A)$ an optimal input state such that $\bd_{\alpha}(\mN\|\mM) = \bD_{\alpha}(\mN(\varphi)\|\mM(\varphi))$.
Consider the task of distinguishing quantum states $\mN(\varphi)$ and $\mM(\varphi)$. Suppose the optimal test in $\ER(r|\mN(\varphi)\|\mM(\varphi))$ is given by the sequence $\{\Pi_{n}\}_{n\in \mathbb{N}}$. Then by the quantum Hoeffding theorem~\eqref{eq: error exponenet state version} we have
\begin{align}
   \limsup_{n\to +\infty} \frac{1}{n} \log \tr [\mM(\varphi)]^{\ox n}\Pi_{n}  & \leq -r \qquad \text{and}\\
  - \limsup_{n\to +\infty} \frac{1}{n} \log (1-\tr [\mN(\varphi)]^{\ox n} \Pi_{n}) &  = \sup_{0<\alpha <1} \frac{\alpha-1}{\alpha} \left[r - \widebar D_\alpha(\cN(\varphi)\|\cM(\varphi))\right].\label{eq: product error exponent tmp1}
\end{align}
Note that $(\{\varphi\}_{i=1}^n,\Pi_n)$ is a product strategy for the task of channel discrimination between $\mN^{\ox n}$ and $\mM^{\ox n}$. Then we have
\begin{align}
\ER_{\pro}\left(r|\mN\|\mM\right) & \geq - \limsup_{n\to +\infty} \frac{1}{n} \log (1-\tr \mN^{\ox n}(\varphi^{\ox n})\Pi_{n})\\
  & = - \limsup_{n\to +\infty} \frac{1}{n} \log (1-\tr [\mN(\varphi)]^{\ox n} \Pi_{n})\\
  & = \sup_{0<\alpha < 1} \frac{\alpha-1}{\alpha} \left[r - \bD_{\alpha}(\mN(\varphi)\|\mM(\varphi))\right]\\
  & = \sup_{0<\alpha < 1} \frac{\alpha-1}{\alpha} \left[r - \bd_{\alpha}(\mN\|\mM)\right]
\end{align}
where the second equality follows from~\eqref{eq: product error exponent tmp1}, the third equality follows from the optimality assumption of $\varphi$. This completes the proof.
\end{proof} \vspace{0.2cm}

\begin{theorem}[Coherent strategy]\label{thm: error exponent coherent}
Let $\mN, \mM \in \cptp(A\to B)$ and $r > 0$. Then it holds that
  \begin{align}
\ER_{\coh}(r|\mN\|\mM) \geq \sup_{0<\alpha <1} \frac{\alpha-1}{\alpha} \left[r - \widebar d_\alpha^\reg(\cN\|\cM)\right].
  \end{align}
\end{theorem}

\begin{proof}
For any given $m \in \mathbb{N}$, let $\psi_m \in \md(A^m)$ an optimal input state such that $\bd_{\alpha}(\mN^{\ox m}\|\mM^{\ox m}) = \bD_{\alpha}(\mN^{\ox m}(\psi_m)\|\mM^{\ox m}(\psi_m))$. Denote $\rho_m := \mN^{\ox m}(\psi_m)$ and $\sigma_m := \mM^{\ox m}(\psi_m)$.
Consider the task of distinguishing quantum states $\rho_m$ and $\sigma_m$. Suppose the optimal test in $\ER(r|\rho_m\|\sigma_m)$ is given by the sequence $\{\Pi_{m,n}\}_{n\in \mathbb{N}}$. Then by the quantum Hoeffding theorem (see~\eqref{eq: error exponenet state version}) we have
\begin{align}
   \limsup_{n\to +\infty} \frac{1}{n} \log \tr [\sigma_m]^{\ox n}\Pi_{m,n}  & \leq -r \qquad \text{and}\\
  - \limsup_{n\to +\infty} \frac{1}{n} \log (1-\tr [\rho_m]^{\ox n} \Pi_{m,n}) 
  & = \sup_{0<\alpha <1} \frac{\alpha-1}{\alpha} \left[r - \widebar D_\alpha(\rho_m\|\sigma_m)\right].\label{eq: error exponent tmp1}
\end{align}
Note that $(\psi_m^{\ox n},\Pi_{m,n})$ is a coherent strategy for the task of channel discrimination between $\mN^{\ox mn}$ and $\mM^{\ox mn}$, satisfying
\begin{align}
  \limsup_{n\to +\infty} \frac{1}{mn} \log \tr \mM^{\ox mn}(\psi_m^{\ox n}) \Pi_{m,n}  & \leq - \frac{r}{m}.
 \end{align} 
 Then we have
\begin{align}
\ER_{\coh}\left(\frac{r}{m}\Big|\mN\Big\|\mM\right) & \geq - \limsup_{n\to +\infty} \frac{1}{mn} \log (1-\tr \mN^{\ox mn}(\psi_m^{\ox n})\Pi_{m,n})\\
  & = - \limsup_{n\to +\infty} \frac{1}{mn} \log (1-\tr [\rho_m]^{\ox n} \Pi_{m,n})\\
  & = \frac{1}{m}  \sup_{0<\alpha < 1} \frac{\alpha-1}{\alpha} \left[r - \bD_{\alpha}(\rho_m\|\sigma_m)\right]\\
  & = \frac{1}{m}  \sup_{0<\alpha < 1} \frac{\alpha-1}{\alpha} \left[r - \bd_{\alpha}(\mN^{\ox m}\|\mM^{\ox m})\right]\\
  & =  \sup_{0<\alpha < 1} \frac{\alpha-1}{\alpha} \left[\frac{r}{m} - \frac{1}{m}\bd_{\alpha}(\mN^{\ox m}\|\mM^{\ox m})\right],
\end{align}
where the second equality follows from~\eqref{eq: error exponent tmp1}, the third equality follows from the optimality assumption of $\psi_m$. Replacing $r/m$ as $r$, we have
\begin{align}\label{eq: error exponent proof tmp3}
\ER_{\coh}\left(r|\mN\|\mM\right) \geq \sup_{0<\alpha < 1} \frac{\alpha-1}{\alpha} \left[r - \frac{1}{m}\bd_{\alpha}(\mN^{\ox m}\|\mM^{\ox m})\right].
\end{align}
Since~\eqref{eq: error exponent proof tmp3} holds for any integer $m \in \mathbb{N}$, we have
\begin{align}
\ER_{\coh}\left(r|\mN\|\mM\right) & \geq \sup_{m\in \mathbb{N}}\sup_{0<\alpha < 1} \frac{\alpha-1}{\alpha} \left[r - \frac{1}{m}\bd_{\alpha}(\mN^{\ox m}\|\mM^{\ox m})\right]\\
   & = \sup_{0<\alpha < 1} \sup_{m\in \mathbb{N}} \frac{\alpha-1}{\alpha} \left[r - \frac{1}{m}\bd_{\alpha}(\mN^{\ox m}\|\mM^{\ox m})\right]\\
   & = \sup_{0<\alpha < 1} \frac{\alpha-1}{\alpha} \left[r - \sup_{m\in \mathbb{N}} \frac{1}{m}\bd_{\alpha}(\mN^{\ox m}\|\mM^{\ox m})\right]\\
   & = \sup_{0<\alpha < 1} \frac{\alpha-1}{\alpha} \left[r - \bd_{\alpha}^\reg(\mN\|\mM)\right].
\end{align} 
This completes the proof.
\end{proof} \vspace{0.2cm}

\section{Quantum communication as quantum channel discrimination}
\label{sec: Quantum communication as quantum channel discrimination}

Quantum communication via quantum channels forms the cornerstone of future quantum networks~\cite{fang2023quantum} and the quantum channel capacity is a central question in quantum Shannon theory~\cite{Wang2017a,Wang2017d,fang2021geometric,Fang2018}. In this section, we present a perspective by framing the study of quantum communication problems as quantum channel discrimination tasks. This perspective offers deeper insights into the intricate relationships between channel capacities, channel discrimination, and the mathematical properties of quantum channel divergences. One one hand, leveraging this connection, we demonstrate that the channel coherent information and quantum channel capacity can be precisely characterized as Stein exponent for discriminating between two quantum channels under product and coherent strategies without quantum memory assistance, respectively. Furthermore, we show that the strong converse property of quantum channel capacity can be established if the channels being discriminated exhibit the strong converse property. On the other hand, the extreme non-additivity of quantum channel capacity implies a similar fundamental property for the unstabilized channel divergence, which can be of independent interest for future studies. 

\subsection{Operational interpretation of quantum channel capacity}

In this subsection we discuss quantum channel communication and its operational interpretation in the context of quantum channel discrimination. The coding scheme for $n$ uses of the channel is depicted in Figure~\ref{fig: quantum coding framework}. We are given a quantum channel $\cN \in \cptp(A\to B)$ and denote by $\cN^{\ox n}$ the $n$-fold parallel repetition of this channel. An \emph{entanglement transmission code} for $\cN^{\ox n}$ is given by a triplet $\{|K|, \cE, \cD\}$, where $|K|$ is the local dimension of a maximally entangled state $\Phi_{EK}:= \frac{1}{|E|}\sum_{i,j=1}^{|E|} \ket{ii}\bra{jj}$ that is to be
transmitted over $\cN^{\ox n}$. The quantum channels $\cE \in \cptp(K \to A^n)$
and $\cD \in \cptp(B^n \to K)$ are encoding and
decoding operations, respectively. Denote the outcome state after the coding strategy by 
\begin{align}
  \rho_{EK}[\cE,\cD] & := \cI_E\ox \cD_{B\to K}\circ \cN_{A\to B}\circ \cE_{K\to A}(\Phi_{EK}).
\end{align}
With this in hand, we now say that a triplet $\{r, n, \ve\}$ is \emph{achievable} on the channel $\cN$ if there exists an entanglement transmission code satisfying
\begin{align}
  \frac{1}{n} \log |K| \geq r \qquad \text{and} \qquad F\left(\Phi_{EK},\rho_{EK}[\cE,\cD]\right) \geq 1-\ve,
\end{align}
where $F(\rho,\sigma) := (\|\sqrt{\rho}\sqrt{\sigma}\|_1)^2$ is the quantum fidelity and $\|\cdot\|_1$ is the trace norm. If one of the state is pure, we have the simplification $F(\ket{\psi}\bra{\psi},\sigma) = \tr [\ket{\psi}\bra{\psi}\sigma]$.

\begin{figure}[H]
\centering
\includegraphics[width=9cm]{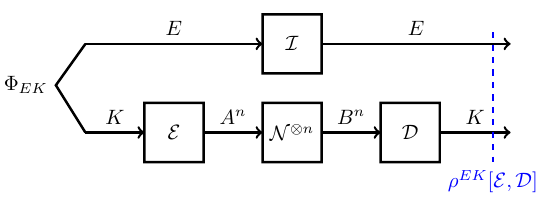}
\caption{Coding Scheme for entanglement transmission over $n$ uses of a channel $\cN \in \cptp(A\to B)$. The system $E$ and $K$ are isomorphic. The encoder $\cE \in \cptp(K \to A^n)$ encodes the part $K$ of the maximally entangled state $\Phi_{EK}$ into the channel input systems. Later, the decoder $\cD \in \cptp(B^n \to K)$ recovers the state from the channel output systems. The map $\cI \in \cptp(E \to E)$ is the identity map. The final state after the coding strategy is denoted as $\rho_{EK}[\cE,\cD]$ and the performance of the code is quantified using the fidelity $F\left(\Phi_{EK},\rho_{EK}[\cE,\cD]\right)$.}
\label{fig: quantum coding framework}
\end{figure}

When considering a single use of the channel, the one-shot quantum capacity, which establishes the boundary of all achievable triples $\{r,1,\ve\}$, is defined as follows:

\begin{definition}
Let $\cN \in \cptp(A\to B)$ be a quantum channel and $\ve \in (0,1)$ be a fixed error. The one-shot quantum capacity of $\cN$ is defined by
\begin{align}\label{eq: definition one shot capcity}
Q^{(1)}(\cN,\ve) :=  \sup_{\substack{|K|=|E| \in \mathbb{N}\\ \cE \in \cptp(K \to A) \\ \cD \in \cptp(B \to K)}} \ \Big\{ \log |K| : \tr \left( \rho_{EK}[\cE,\cD] \cdot \Phi_{EK}  \right)\geq 1-\ve\Big\}.
\end{align} 
Then the quantum capacity is defined as the asymptotic limit 
\begin{align}
  Q(\cN) := \lim_{\ve \to 0} \lim_{n\to \infty} \frac1n Q^{(1)}(\cN^{\ox n},\ve).
\end{align}
\end{definition}

The well-established Lloyd-Shor-Devetak Theorem~\cite{lloyd1997capacity,shor2002quantum,devetak2005private} states that the quantum capacity of a channel can be expressed in terms of a regularized channel coherent information:
\begin{align}\label{eq: quantum capacity theorem}
  Q(\cN) = \lim_{n\to \infty} \frac1n I_c(\cN^{\ox n}) = \sup_{n \in \mathbb{N}} \frac1n I_c(\cN^{\ox n}),
\end{align}
where the channel coherent information is defined by
\begin{align}
  I_c(\cN) := \sup_{\rho \in \md(EA)} - H(E|B)_{\rho} \quad \text{with}\quad \rho_{EB} = \cI_E\ox \cN_{A\to B}(\rho_{EA}),
\end{align}
and the supremum is taken over all density matrices $\rho$ on system $E\ox A$ and $E$ is isomorphic to $A$.

Based on the notion of unstabilized channel divergence, we can rewrite the channel coherent information and quantum capacity as follows:

\begin{theorem}
\label{lem: coherent information channel divergence}
For any quantum channel $\cN \in \cptp(A\to B)$, it holds that
\begin{align}
  I_c(\cN)  =  \ulD\left(\,\cI_E \ox \cN_{A\to B}\big\|\cR^{\pi}_E \ox \cN_{A \to B}\,\right) - \log |E| &,\label{eq: channel coherent information unstabilized}\\
  Q(\cN)  = \ulD^\reg\left(\,\cI_E \ox \cN_{A\to B}\big\|\cR^{\pi}_E \ox \cN_{A \to B}\,\right) - \log |E| &,\label{eq: channel capacity unstabilized}
\end{align}
where $E$ is isomorphic to $A$ and $\cR_E^\pi \in \cptp(E \to E)$ represents a replacer channel that maps any input state to a maximally mixed state $\pi \in \md(E)$.
\end{theorem}

\begin{proof}
For any given state $\rho \in \md(EB)$, we have
\begin{align}
- H(E|B)_\rho   
& = D(\rho_{EB}\|I_E\ox [\tr_E \rho_{EB}]) = D(\rho_{EB}\|\cR^{\pi}_E \ox \cI_B (\rho_{EB})) - \log |E|.
\end{align}
Let $\rho_{EB} = \cI_E\ox \cN_{A\to B}(\rho_{EA})$ and take supremum over all $\rho \in \md(EA)$. We get
\begin{align}
I_c(\cN) & = \sup_{\rho \in \md(EA)} D\left(\cI_E\ox \cN_{A\to B}(\rho_{EA})\big\|(\cR^{\pi}_E \ox \cI_B) \circ (\cI_E\ox \cN_{A\to B})(\rho_{EA})\right) - \log |E|\\
& = \ulD\left(\cI_E \ox \cN_{A\to B}\big\|\cR^I_{E}\ox \cN_{A\to B}\right) - \log|E|.
\end{align}
The equation~\eqref{eq: channel capacity unstabilized} directly follows from~\eqref{eq: channel coherent information unstabilized} by taking a regularization on both sides. That is,
\begin{align}
  Q(\cN) & = \lim_{n\to \infty} \frac1n I_c(\cN^{\ox n})\\
  & = \lim_{n\to \infty} \frac1n \ulD\left(\,\cI_{E^n} \ox [\cN_{A\to B}]^{\ox n}\big\|\cR_{\pi^{\ox n}} \ox [\cN_{A\to B}]^{\ox n}\,\right) - \log |E|\\
  & = \lim_{n\to \infty} \frac1n \ulD\left(\,\big[\cI_{E} \ox \cN_{A\to B}\big]^{\ox n}\big\|\big[\cR_{\pi} \ox \cN_{A\to B}\big]^{\ox n}\,\right) - \log |E|\\
  & = \ulD^\reg\left(\,\cI_E \ox \cN_{A\to B}\big\|\cR^{\pi}_E \ox \cN_{A \to B}\,\right) - \log |E|,
\end{align}
which completes the proof.
\end{proof} \vspace{0.2cm}

\begin{remark}(Operational interpretation.)
From the operational meaning of $d^\reg$, we can understand quantum capacity as the Stein exponent of channel discrimination between the ideal case $\cI_E\ox \cN_{A\to B}$ and the worst case $\cR^{\pi}_E\ox \cN_{A\to B}$. Noting that $\ulD^\reg(\cI_E\|\cR^{\pi}_E) = \log |E|$, we can also write 
\begin{align}
Q(\cN) = \ulD^\reg(\cI_E \ox \cN_{A\to B}\|\cR^{\pi}_E \ox \cN_{A \to B}) - \ulD^\reg(\cI_E\|\cR^\pi_E),    
\end{align}
indicating that the quantum capacity of a channel $\cN$ can be understood as the ``power'' of this channel as a catalyst to discriminate the perfect channel $\cI_E$ and the completely useless channel $\cR^{\pi}_E$ for quantum communication.  
\end{remark}

Drawing upon the correspondence established in Theorem~\ref{lem: coherent information channel divergence} and the extreme non-additivity of channel coherent information as shown in~\cite{cubitt2015unbounded} (where an unbounded number of channel uses may be necessary to detect quantum capacity), we can infer that the unstabilized quantum channel divergence can also exhibit extreme non-additivity.

\begin{theorem}\label{thm: extremely non-additive}
Let $d$ be the unstablized quantum channel divergence induced by the Umegaki relative entropy. Then $d$ is extremely non-additive. That is, for any $n\in \mathbb{N}$, there exists quantum channels $\cN,\cM \in \cptp(A\to B)$ such that
\begin{align}
    d(\cN^{\ox n}\|\cM^{\ox n}) < d^\reg(\cN\|\cM).
\end{align}
\end{theorem}

\begin{proof}
It has been shown in ~\cite{cubitt2015unbounded} that for any $n\in \mathbb{N}$, there exists quantum channels $\cE\in \cptp(A\to B)$ such that $I_c(\cE^{\ox n}) = 0 < Q(\cE)$. Then by the relation in Theorem~\ref{lem: coherent information channel divergence}, we can take $\cN = \cI \ox \cE$ and $\cM = \cR^{\pi}_E \ox \cE$. This gives $d(\cN^{\ox n}\|\cM^{\ox n}) = \log |A| < d^\reg(\cN\|\cM)$.
\end{proof} \vspace{0.2cm}

\subsection{One-shot converse bound for quantum channel capacity}
\label{sec: One-shot converse bound}

Here we establish a converse bound for one-shot quantum capacity, which can be seen as a smoothed analogue of channel coherent information. The one-shot converse bound is mostly inspired by the channel divergence formula of coherent information~\eqref{eq: channel capacity unstabilized}. That is, the channel coherent information as well as quantum capacity characterize the distinguishability between the channel $\cI_E \ox \cN_{A\to B}$ and the CP map $\cR^{I}_E \ox \cN_{A\to B}$ (here we use the identity operator $I$ instead of $\pi$ to obsorb the constant factor $\log |E|$). It is thus convenient for us to consider a conceptual process in Figure~\ref{fig: quantum coding framework 1} which replaces the identity map $\cI_E$ in Figure~\ref{fig: quantum coding framework} with the CP map $\cR^I_E$. Its final state is denoted as
\begin{align}\label{eq: definition of sigam}
\sigma_{EK}[\cE,\cD] & := \cR^I_E\ox \cD_{B\to K}\circ \cN_{A\to B}\circ \cE_{K\to A}(\Phi_{EK}) = \cR^I_E\ox \cI_K(\rho_{EK}[\cE,\cD]).
\end{align}

\begin{figure}[H]
\centering
\includegraphics[width=9cm]{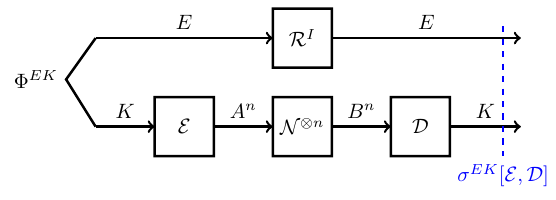}
\caption{A conceptual process that replaces $\cI_E$ in Figure~\ref{fig: quantum coding framework} with the CP map $\cR^I_E$, and gives the final state $\sigma_{EK}[\cE,\cD]$.}
\label{fig: quantum coding framework 1}
\end{figure}

\vspace{0.2cm}

\begin{theorem}[One-shot converse bound]
\label{thm: capacity one shot upper bound}
For any $\cN \in \cptp(A\to B)$ and $\ve \in (0,1)$, it holds that
\begin{align}
Q^{(1)}(\cN,\ve) \leq \sup_{|E| \in \mathbb{N}} {d_H^\ve}\left(\cI_E\ox \cN_{A\to B}\big\|\cR^I_E\ox \cN_{A\to B}\right),
\end{align}
where the supremum is taken over $E$ of arbitrary dimension. 
\end{theorem}

\begin{proof} 
For any entanglement transmission code $\{|K|,\cE,\cD\}$ such that $\tr \Phi_{EK} \rho_{EK}[\cE,\cD] \geq 1-\ve$.
We have a key observation that 
\begin{align}
\tr \Phi_{EK} \sigma_{EK}[\cE,\cD] & = \tr \Phi_{EK} \left\{\cR^I_E\ox \cI_K(\rho_{EK}[\cE,\cD])\right\}\\
& = \tr \Phi_{EK} \left\{I_E \ox \tr_E(\rho_{EK}[\cE,\cD])\right\} \\
& = \tr \left\{\tr_E(\Phi_{EK})\right\} \left\{\tr_E(\rho_{EK}[\cE,\cD])\right\}\\
& = \tr \left\{I_K/|K|\right\} \left\{\tr_E(\rho_{EK}[\cE,\cD])\right\}\\
& = 1/{|K|}, 
\end{align}
where the first line follows by~\eqref{eq: definition of sigam}, the second line follows by definition of $\cR^I_E$, the third line follows by the identity $\tr X_{AB} (I_A \ox Y_{B}) = \tr\{\tr_A X_{AB}\}\{Y_B\}$, the last line follows by the fact that $\tr_E(\rho_{EK}[\cE,\cD])$ is a normalized quantum state.
Then we have
\begin{align}
\log |K| & = - \log \tr \Phi_{EK} \sigma_{EK}[\cE,\cD]\\
& \leq D_H^\ve\left(\rho_{EK}[\cE,\cD]\big\|\sigma_{EK}[\cE,\cD]\right)\\
& \leq D_H^\ve\left((\cI_E\ox \cN_{A\to B})\circ(\cE_{K\to A}(\Phi_{EK}))\big\|(\cR^I_E\ox \cN_{A\to B})\circ(\cE_{K\to A}(\Phi_{EK}))\right)\\
& \leq \sup_{\rho\in \md(EA)}{D_H^\ve}\left(\cI_E\ox \cN_{A\to B}(\rho_{EA})\big\|\cR^I_E\ox \cN_{A\to B}(\rho_{EA})\right)
\end{align}
where the first inequality follow becasue $\Phi_{EK}$ is a particular choice of quantum test for hypothesis testing relative entropy that satisfies $\tr \Phi_{EK} \rho_{EK}[\cE,\cD] \geq 1-\ve$, the second inequality follows by the data-processing inequalty of $D_H^\ve$ under the action of $\cI_E \ox \cD_{B\to K}$, 
the third inequality follows by relaxing $\cE_{K\to A}(\Phi_{EK})$ to all density matrices on system $E \ox A$. Finally taking supremum over all possible codes $\{|K|,\cE,\cD\}$, we have the desired result.
\end{proof} \vspace{0.2cm}

\begin{corollary}
\label{cor: capacity one shot upper bound Dmax}
For any $\cN \in \cptp(A\to B)$ and $\ve \in (0,1)$, $\delta \in (0,1)$ and $\ve + \delta < 1$, it holds that
\begin{align}\label{eq: Dmax converse bound}
Q^{(1)}(\cN,\ve) \leq  {\ulD_{\max}^\delta}\left(\cI_E\ox \cN_{A\to B}\big\|\cR^\pi_E\ox \cN_{A\to B}\right) + \log\frac{1}{1-\ve - \delta} - \log |E|,
\end{align}
where $E$ is isomorphic to $A$. 
\end{corollary}
\begin{proof}
Combining Theorem~\ref{thm: capacity one shot upper bound} and Eq.~\eqref{eq: Dh and Dmax 1} we have
\begin{align}
 Q^{(1)}(\cN,\ve) \leq \sup_{\substack{|E| \in \mathbb{N} \\ \rho\in \md(EA)}} D_{\max}^\delta\left(\cI_E\ox \cN_{A\to B}(\rho_{EA})\big\|\cR^I_E\ox \cN_{A\to B}(\rho_{EA})\right) + \log \frac{1}{1-\ve - \delta}.
\end{align}
Since $D_{\max}^\delta$ is jointly quasi-convex~\cite[Lemma 7]{datta2013smooth}, the optimization can be restricted to pure states. Furthermore, due to the isometry invariance property of $D_{\max}^\delta$, we can, without loss of generality, assume that $E$ is isomorphic to $A$. Finally, noting that $\cR_E^I(\cdot) = |E|\cR_E^\pi(\cdot)$ and $D_{\max}^\delta(\rho\|a\sigma) = D_{\max}^\delta(\rho\|\sigma) - \log a$, we have the asserted result.
\end{proof} \vspace{0.2cm}

\subsection{Towards the strong converse property of quantum channel capacity}

Consider any entanglement transmission code with an achievable triplet $\{r,n,\ve\}$. The strong converse property of channel $\cN$ is that if the communication rate $r > Q(\cN)$, then the communication error $\ve$ converges to one as $n$ goes to infinity. Similar to the proof of Theorem~\ref{thm: channel stein lemma}, this can be equivalently expressed by
\begin{align}
    \limsup_{n\to \infty} \frac{1}{n} Q^{(1)}(\cN^{\ox n},\ve) \leq Q(\cN), \quad \forall \ve \in (0, 1).
\end{align}
The strong converse property of quantum capacity is a long-standing open problem in quantum information theory. Upon the connnection between quantum communication and channel discrimination, we show that strong converser property for channel discrimination implies the strong converse property of quantum capacity.

\begin{theorem}[Strong converse property]\label{thm: strong converse theorem}
Let $\cN \in \cptp(A\to B)$ be a quantum channel, $\cI \in \cptp(E\to E)$ be the identity channel with $E$ isomorphic to $A$ and $\cR_E^\pi \in \cptp(E\to E)$ be the replacer channel. Then if the channels $\cI_E \ox \cN_{A\to B}$ and $\cR_E^\pi \ox \cN_{A\to B}$ exhibit the strong converse property, as defined in Definition~\ref{def: strong converse property}, this implies the strong converse property of the channel capacity for $\cN$.
\end{theorem}
\begin{proof}
For any $\ve\in (0,1)$, let $\delta \in (0,1)$ such that $\ve + \delta < 1$. Then  
\begin{align}
\limsup_{n\to \infty } \frac1n Q^{(1)}(\cN^{\ox n},\ve) 
& \leq  \limsup_{n\to \infty } \frac1n {\ulD_{\max}^\delta}\left((\cI_E\ox \cN_{A\to B})^{\ox n}\big\|(\cR^\pi_E\ox \cN_{A\to B})^{\ox n}\right) - \log |E|\\
& \leq d^\reg\left(\cI_E\ox \cN_{A\to B}\big\|\cR^\pi_E\ox \cN_{A\to B}\right)) - \log |E|\\
& = Q(\cN),
\end{align}
where the first inequality follows from Corollary~\ref{cor: capacity one shot upper bound Dmax}, the second inequality follows from Theorem~\ref{thm: strong Dmax AEP} and the equality follows from Theorem~\ref{lem: coherent information channel divergence}. This completes the proof.
\end{proof} \vspace{0.2cm}

\section{Discussion}

In conclusion, this work advances the understanding of the ultimate limits of quantum channel discrimination and quantum communication by developing versatile tools and frameworks rooted in unstabilized channel divergence. We address key open problems, such as improving bounds on hypothesis testing relative entropy, proving additivity for channel divergences, and establishing a quantum channel analog of Stein's lemma. Our unified approach links channel discrimination strategies with operational regimes and mathematical divergences, providing a comprehensive perspective on quantum channel discrimination across various settings. Furthermore, by framing quantum communication problems as quantum channel discrimination tasks, we uncover connections between channel capacities, channel discrimination, and operational exponents. These results bridge two core areas of quantum information theory and offer new insights for future exploration.

An initial attempt to prove the exponentially strong converse property for two general channels was presented in the first arXiv submission of this work (arXiv:2110.14842v1). However, this effort triggers the discovery of a gap in a technical lemma from~\cite{brandao2010generalization}, which undermines the validity of the original proof and leaves the problem unresolved. Notably, this gap has drawn great attention in the quantum information community since then and the generalized quantum Stein's lemma, originally proposed in~\cite{brandao2010generalization}, has been recently reproved in~\cite{hayashi2024generalized,lami2024solution}. Given our findings in this work that the strong converse property is a pivotal element for achieving a complete understanding of quantum channel discrimination, we encourage interested readers to give further investigations into this important problem. Several results from our initial analysis remain valid and could hold independent interest. These details are included in~\ref{sec: Some useful results}, and we hope they will inspire and support future efforts to resolve this challenging issue.

\paragraph{Acknowledgements.} {We thank Marco Tomamichel for pointing out the gap in an initial attempt to prove the exponentially strong converse property in our first arXiv submission of this work (arXiv:2110.14842v1). K.F. is supported by the National Natural Science Foundation of China (Grant No. 92470113 and 12404569), the Shenzhen Science and Technology Program (Grant No. JCYJ20240813113519025), the Shenzhen Fundamental Research Program (Grant No. JCYJ20241202124023031), the 1+1+1 CUHK-CUHK(SZ)-GDST Joint Collaboration Fund (Grant No. GRDP2025-022), and the University Development Fund (Grant No. UDF01003565). G.G. is supported by the Israel Science Foundation under Grant No. 1192/24. XW was partially supported by the National Key R\&D Program of China (Grant No.~2024YFE0102500),  the National Natural Science Foundation of China (Grant No.~12447107), the Guangdong Natural Science Foundation (Grant No.~2025A1515012834), the Guangdong Provincial Quantum Science Strategic Initiative (Grant No.~GDZX2403008, GDZX2403001), the Guangdong Provincial Key Lab of Integrated Communication, Sensing and Computation for Ubiquitous Internet of Things (Grant No.~2023B1212010007), the Quantum Science Center of Guangdong-Hong Kong-Macao Greater Bay Area, and the Education Bureau of Guangzhou Municipality. }

\bibliographystyle{alpha_abbrv}
\bibliography{Stein}

\appendix

\section{Attempt to solve the strong converse property}
\label{sec: Some useful results}

An initial attempt to prove the exponentially strong converse property for two general channels, $\cI \otimes \cN$ and $\cI \otimes \cM$, was presented in the first arXiv submission (arXiv:2110.14842v1). However, a flaw has been identified in the original proof, rendering it invalid and leaving the problem unresolved. Nevertheless, several preliminary results from our initial analysis remain valid and may hold independent interest. These details are included in this appendix, and we hope they will contribute to resolving the problem in future studies.

The first lemma is an analog of a result of Ogawa and Nagaoka~\cite{ogawa2000strong} that was originally used to establish the strong converse of quantum Stein's lemma. A similar result was proved by Brand\~ao and Plenio for tensor product states~\cite{brandao2010generalization}. Here we extend it further to permutation-symmetric states. 

\begin{lemma}\label{lem:ubd}
Let $\mu\in\mbb{R}$ and $\rho_n,\sigma_n\in\md(A^n)$ be symmetric under permutations of the $n$ subsystems such that 
$\supp(\rho_n)\subseteq\supp(\sigma_n)$.
Then, for any $r\in\mbb{R}$ and $s\in[0,1]$ the following relation holds
\be\label{10139}
\tr\left(\rho_{n}-2^{\mu n}\sigma_n\right)_+\leq 2^{-nrs+\log\tr[\rho^{1+s}_n]}+2^{-ns(\mu-r)+s|A|\log(1+n)+\log\tr[\rho_n\sigma^{-s}_n]}\;.
\ee
\end{lemma}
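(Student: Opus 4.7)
The plan is to adapt the Ogawa-Nagaoka strong-converse argument from i.i.d.~state discrimination to permutation-symmetric states, with the polynomial overhead $(1+n)^{s|A|}$ coming from the Schur-Weyl block structure. First I would split the positive part via a spectral projection of $\rho_n$: let $P := \{\rho_n \geq 2^{nr} I\}$ and note that, because $(\rho_n - 2^{\mu n}\sigma_n)_+ \leq \rho_n$ (as $\sigma_n \geq 0$),
\ba
\tr(\rho_n - 2^{\mu n}\sigma_n)_+ \leq \tr[P \rho_n] + \tr[(I-P)(\rho_n - 2^{\mu n}\sigma_n)_+].
\ea
The first summand is handled by operator Markov: on $\mathrm{range}(P)$ one has $\rho_n^s \geq 2^{nrs} I$ by operator monotonicity of $x \mapsto x^s$ on $[0,\infty)$, so $P \leq 2^{-nrs}\rho_n^s$ and consequently $\tr[P\rho_n] \leq 2^{-nrs}\tr[\rho_n^{1+s}]$, exactly reproducing the first term.

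For the second summand, the plan is to exploit the permutation symmetry via the Schur-Weyl decomposition $A^{\otimes n} \cong \bigoplus_\lambda \mathcal{U}_\lambda \otimes \mathcal{S}_\lambda$, where $\lambda$ ranges over Young diagrams of $n$ with at most $|A|$ rows; there are at most $(1+n)^{|A|}$ such diagrams. Any permutation-invariant state takes the block-diagonal form $\rho_n = \bigoplus_\lambda p_\lambda \rho_\lambda \otimes \pi_{\mathcal{S}_\lambda}$ and $\sigma_n = \bigoplus_\lambda q_\lambda \sigma_\lambda \otimes \pi_{\mathcal{S}_\lambda}$, with $\pi_{\mathcal{S}_\lambda}$ the maximally mixed state on the Specht module. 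Since $\rho_n$ is block-diagonal, so is $P = \bigoplus_\lambda P_\lambda$, and
\ba
\tr[(I-P)(\rho_n - 2^{\mu n}\sigma_n)_+] = \sum_\lambda \tr[(I-P_\lambda)(p_\lambda\rho_\lambda - 2^{\mu n}q_\lambda\sigma_\lambda)_+ \otimes \pi_{\mathcal{S}_\lambda}].
\ea
Within each block the complementary spectral bound translates to $p_\lambda\rho_\lambda \leq 2^{nr}(\dim\mathcal{S}_\lambda) I$ on the relevant support, and an Ogawa-Nagaoka-type estimate --- using $\{p_\lambda\rho_\lambda - 2^{\mu n}q_\lambda\sigma_\lambda \geq 0\} \leq 2^{-n\mu s}p_\lambda^{s} q_\lambda^{-s}\bigl(\sigma_\lambda^{-s/2}\rho_\lambda\sigma_\lambda^{-s/2}\bigr)^{s}$ together with Araki-Lieb-Thirring rearrangement --- yields a block contribution of the form $2^{-ns(\mu-r)}p_\lambda q_\lambda^{-s}\tr[\rho_\lambda\sigma_\lambda^{-s}](\dim\mathcal{S}_\lambda)^{s}$. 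Summing over the $\leq (1+n)^{|A|}$ blocks and invoking the direct identity
\ba
\tr[\rho_n\sigma_n^{-s}] = \sum_\lambda p_\lambda q_\lambda^{-s}\tr[\rho_\lambda\sigma_\lambda^{-s}](\dim\mathcal{S}_\lambda)^{s},
\ea
one absorbs the block count into the polynomial factor $(1+n)^{s|A|}$, producing the claimed second term.

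The main obstacle will be the non-commutative block-wise Ogawa-Nagaoka estimate: in the commuting case a direct eigenvalue comparison delivers the bound with no polynomial overhead, so the entire $(1+n)^{s|A|}$ factor is the cost of passing through Schur-Weyl for non-commuting symmetric states. Because the exponent $s|A|$ reflects only the count of Young diagrams --- and not the Weyl-module dimensions $\dim\mathcal{U}_\lambda$, which are also polynomial in $n$ --- I expect the cleanest route is Hölder's inequality with conjugate exponents $(1+s,\,(1+s)/s)$ applied to the block sum, combined with operator monotonicity and Araki-Lieb-Thirring within each block. The hypothesis $\supp(\rho_n) \subseteq \supp(\sigma_n)$ ensures that the inverse powers $\sigma_n^{-s}$ and the block inverses $\sigma_\lambda^{-s}$ are well-defined on the relevant supports, so no additional regularization is needed.
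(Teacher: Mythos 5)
There is a genuine gap at the very first step. Your split rests on the operator inequality $(\rho_n-2^{\mu n}\sigma_n)_+\leq\rho_n$, which is false for non-commuting operators: it is equivalent to $|\rho_n-2^{\mu n}\sigma_n|\leq\rho_n+2^{\mu n}\sigma_n$, and already for $\rho_n=\proj{0}$ and $2^{\mu n}\sigma_n=\proj{+}$ one computes $\bra{1}\big(\rho_n+2^{\mu n}\sigma_n-|\rho_n-2^{\mu n}\sigma_n|\big)\ket{1}=\tfrac12-\tfrac{1}{\sqrt{2}}<0$. Consequently the inequality $\tr(\rho_n-2^{\mu n}\sigma_n)_+\leq\tr[P\rho_n]+\tr[(I-P)(\rho_n-2^{\mu n}\sigma_n)_+]$ is unjustified: it would require $\tr[P(\rho_n-2^{\mu n}\sigma_n)_+]\leq\tr[P\rho_n]$, and since your spectral projection $P$ of $\rho_n$ need not commute with the projector onto the positive part of $\rho_n-2^{\mu n}\sigma_n$, this does not follow. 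This non-commutativity is precisely the obstruction the Ogawa--Nagaoka method is built to circumvent, and your outline reintroduces it. A second, related gap is the asserted bound $\{p_\lambda\rho_\lambda-2^{\mu n}q_\lambda\sigma_\lambda\geq 0\}\leq 2^{-n\mu s}p_\lambda^{s}q_\lambda^{-s}\big(\sigma_\lambda^{-s/2}\rho_\lambda\sigma_\lambda^{-s/2}\big)^{s}$: controlling a spectral projection of a difference of non-commuting positive operators by such an expression is not a standard inequality and is exactly the hard step; it cannot be dispatched by invoking Araki--Lieb--Thirring.

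The paper's proof avoids both problems by \emph{measuring first}: it picks a rank-one von Neumann measurement $\{\Pi_x\}$ that refines the projector onto the positive part of $\rho_n-2^{\mu n}\sigma_n$ and is compatible with the permutation-symmetric block decomposition, passes to the classical distributions $p_x=\tr[\rho_n\Pi_x]$, $q_x=\tr[\sigma_n\Pi_x]$, performs your intended split into the events $\{\tfrac1n\log p_x\geq r\}$ and $\{\tfrac1n\log q_x\leq r-\mu\}$ at the classical level where it is trivially valid, and only then converts back: data processing of the Petz divergence under the associated pinching gives $\sum_x p_x^{1+s}\leq\tr[\rho_n^{1+s}]$, and the pinching inequality $\Delta_\lambda(\sigma_\lambda)\geq\sigma_\lambda/|C_\lambda|$ applied inside each block gives $\sum_x p_xq_x^{-s}\leq(n+1)^{s|A|}\tr[\rho_n\sigma_n^{-s}]$. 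Note also that the factor $(n+1)^{s|A|}$ arises there from the dimension of the Weyl modules, bounded by $\dim\sym^n(A)\leq(n+1)^{|A|}$, through the pinching constant --- not from the count of Young diagrams as you suggest. To salvage your outline you would need to replace the operator-level split by this measure-then-bound strategy.
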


\begin{proof}
Let $\Pi$  be the projection to the positive part of $\rho_n-2^{\mu n}\sigma_n$ and $\Pi=\sum_{x=1}^{|A|^n}a_x\Pi_x$ be a decomposition of $\Pi$ into orthogonal rank-one projectors, where $a_x\in\{0,1\}$ and $\sum_x\Pi_x=I_{A^n}$ (i.e. the set $\{\Pi_x\}$ forms a von-Neumann rank-one projective measurement). 
In general, this decomposition of $\Pi$ is not unique, and the precise choice of $\{\Pi_x\}$ will be determined later on in the proof. Finally, denote by $p_x\eqdef\tr\left[\rho_n \Pi_x\right]$, $q_x\eqdef \tr\left[\sigma_n \Pi_x\right]$ (note that $p_x$ and $q_x$ depends on $n$), and let $\mi$ be the set of all $x$ for which $p_x>2^{\mu n}q_x$. Since $\left(\rho_n-2^{n\mu}\sigma_n\right)_+ = \Pi \left(\rho_n-2^{n\mu}\sigma_n\right)\Pi$, we have
\ba
\tr\left(\rho_n-2^{n\mu}\sigma_n\right)_+ =  \sum_{x}a_x\left(p_x-2^{\mu n}q_x\right) \leq \sum_{x\in\mi}\left(p_x-2^{\mu n}q_x\right)\leq \sum_{x\in\mi}p_x=\pr(\mi)\;,
\ea
where $\pr(\mi)$ is the probability of the set $\mi$ with respect to the probability distribution $\{p_x\}$. Note that the set $\mi$ can be written as
\be
\mi=\left\{x\;:\;\frac1n\log p_x>\mu+\frac1n\log q_x\right\}\;.
\ee
We would like to replace the set $\mi$ with two sets: one depends solely on $p_x$, and the other only on $q_x$. This can be done in the following way. For any $r\in\mbb{R}$ define the two sets
\be
\mi^{(1)}\eqdef\left\{x\;:\;\frac1n\log p_x\geq r\right\}\quad\text{and}\quad
\mi^{(2)}\eqdef\left\{x\;:\;\frac1n\log q_x\leq r-\mu\right\}\;.
\ee 
Note that if $x\in\mi$ then either $x\in\mi^{(1)}$ or $x\in\mi^{(2)}$.
We therefore conclude that
\be
\tr\left(\rho_n-2^{n\mu}\sigma_n\right)_+\leq \pr\left(\mi^{(1)}\right)+\pr\left(\mi^{(2)}\right)\;.
\ee
From Cram\'er's theorem~\cite{dembo1998} it follows that
\begin{align}
&-\log\pr\left(\mi^{(1)}\right)\geq\sup_{s\in[0,1]}\left\{nsr-\log\sum_xp_x^{1+s}\right\}\label{gg1}\\
&-\log\pr\left(\mi^{(2)}\right)\geq\sup_{s\in[0,1]}\left\{ns(\mu-r)-\log\sum_xp_xq_x^{-s}\right\}\;.\label{gg2}
\end{align}
We first bound~\eqref{gg1} in terms of $\rho_n$. For this purpose,
let $\Delta\in\cptp(A^n\to A^n)$ be the completely dephasing map (also a pinching map) $\Delta(\omega)\eqdef\sum_x\Pi_x\omega\Pi_x$ defined on all $\omega\in\md(A^n)$. Then, the density matrix $\Delta\left(\rho_n\right)$ is diagonal (in the basis that the operators $\{\Pi_x\}$ project to) with components $\{p_x\}$ on its diagonal. Hence, denoting by $\pi_{A^n}:=I_{A^n}/|A|^n$ the completely mixed state in $\md(A^n)$, and by $\alpha\eqdef 1+s$, we get by direct calculation that 
\ba
-\log\sum_x p_x^{1+s}=n(\alpha-1)\log|A|-(\alpha-1)\bD_{\alpha}\left(\Delta(\rho_n)\big\|\pi_{A^n}\right)\;.
\ea
Since $\bD_{\alpha}\left(\Delta(\rho_n)\big\|\pi_{A^n}\right) = \bD_{\alpha}\left(\Delta(\rho_n)\big\|\Delta(\pi_{A^n})\right) \leq \bD_{\alpha}\left(\rho_n\big\|\pi_{A^n}\right)$, we get
\begin{align}
  -\log\sum_x p_x^{1+s}\geq n(\alpha-1)\log|A|-(\alpha-1)\bD_{\alpha}\left(\rho_n\big\|\pi_{A^n}\right)=-\log\tr\left[\rho_n^\alpha\right]=-\log\tr\left[\rho^{1+s}_n\right].
\end{align}
Together with~\eqref{gg1}, this gives the first term on the r.h.s. of~\eqref{10139}. 

For the second term, observe that
\ba
\sum_xp_xq_x^{-s}=\tr\left[\Delta\left(\rho_n\right)\left(\Delta\left(\sigma_n\right)\right)^{-s}\right]=\tr\left[\rho_n\left(\Delta\left(\sigma_n\right)\right)^{-s}\right]
\;.
\ea
We now estimate this term by utilizing the symmetry of $\rho_n$ and $\sigma_n$. Since $\rho_n$ and $\sigma_n$ are symmetric under permutations of the $n$ subsystems they can be expressed as
\be
\rho_n=\bigoplus_{\lambda\in\irr(\mS_n)} I^{B_\lambda}\otimes \rho^{C_\lambda}_{\lambda}\quad\text{and}\quad
\sigma_n=\bigoplus_{\lambda\in\irr(\mS_n)} I^{B_\lambda}\otimes \sigma^{C_\lambda}_{\lambda}
\ee
where $\lambda$ represents an irrep of the natural representation of the permutation group $\mS_n$ on $A^n$, and $\rho_\lambda,\sigma_\lambda \geq 0$. We therefore have
\be
\rho_n-2^{\mu n}\sigma_n=\bigoplus_{\lambda\in\irr(\mS_n)} I^{B_\lambda}\otimes \left(\rho^{C_\lambda}_{\lambda}-2^{\mu n}\sigma_\lambda^{C_\lambda}\right)
\ee
The condition $\supp(\rho_n)\subseteq\supp(\sigma_n)$ implies that without loss of generality we can assume that $\sigma_n>0$ (otherwise we can restrict our consideration to the subspace of $\supp(\sigma_n)$ and embed $\rho_n$ in this space). Therefore, under this assumption we have that each  $\sigma_{\lambda}>0$. Let $P_{\lambda}$ be the projector to the support of $\left(\rho_{\lambda}-2^{\mu n}\sigma_\lambda\right)_+$, and let $P_{\lambda}\eqdef\sum_{j=1}^{|C_\lambda|^n}a_{\lambda, j}P_{\lambda, j}$ be a decomposition of $P_\lambda$ into orthogonal rank-one projectors, where $a_{\lambda, j}\in\{0,1\}$ and $\sum_jP_{\lambda,j}=I^{C_\lambda}$. Moreover, for each $\lambda\in\irr(\mS_n)$ decompose $I^{B_\lambda}\eqdef\sum_{k=1}^{|B_\lambda|}|\psi_{\lambda,k}\lr \psi_{\lambda,k}|^{B_\lambda}$, where $\{|\psi_{\lambda,k}\ra\}_k$ forms an orthonormal basis of $B_\lambda$. Finally, we denote by $x\eqdef\{\lambda,j,k\}$ and take $\Pi_x\eqdef |\psi_{\lambda,k}\lr \psi_{\lambda,k}|^{B_\lambda}\otimes P_{\lambda, j}^{C_\lambda}$. With this choice of $\Pi_x$ we get that
\ba
\Delta(\sigma_n)&=\bigoplus_{\lambda\in\irr(\mS_n)}\sum_{j,k}|\psi_{\lambda,k}\lr \psi_{\lambda,k}|^{B_\lambda}\otimes P_{\lambda, j}^{C_\lambda}\sigma_\lambda^{C_\lambda}P_{\lambda, j}^{C_\lambda}\\
&=\bigoplus_{\lambda\in\irr(\mS_n)}I^{B_\lambda}\otimes \sum_jP_{\lambda, j}^{C_\lambda}\sigma_\lambda^{C_\lambda}P_{\lambda, j}^{C_\lambda}\\
&=\bigoplus_{\lambda\in\irr(\mS_n)}I^{B_\lambda}\otimes \Delta_\lambda^{C_\lambda\to C_\lambda}\left(\sigma_\lambda^{C_\lambda}\right)
\ea
where each $\Delta_\lambda(\cdot)\eqdef\sum_jP_{\lambda, j}(\cdot)P_{\lambda, j}$ is a completely dephasing map in $\cptp(C_\lambda\to C_\lambda)$. Therefore,
\ba\label{10151}
\sum_xp_xq_x^{-s}&=\tr\left[\rho_n\left(\Delta\left(\sigma_n\right)\right)^{-s}\right]=\sum_{\lambda\in\irr(\mS_n)}|B_\lambda|\tr\left[\rho_\lambda\big(\Delta_\lambda(\sigma_\lambda)\big)^{-s}\right]
\;.
\ea
From the pinching inequality, for each $\lambda\in\irr(\mS_n)$ we have
$\Delta_\lambda(\sigma_\lambda)\geq \frac1{|C_\lambda|}\sigma_\lambda$. Moreover, since the function $r\mapsto r^\alpha$ is operator anti-monotone for $\alpha\in[-1,0]$ we get that
\be
\big(\Delta_\lambda(\sigma_\lambda)\big)^{-s}\leq \left(\frac1{|C_\lambda|}\sigma_\lambda\right)^{-s}\;.
\ee
Substituting this into~\eqref{10151} gives
\be
\sum_xp_xq_x^{-s}\leq \sum_{\lambda\in\irr(\mS_n)}|C_\lambda|^{s}|B_\lambda|\tr\left[\rho_\lambda\sigma_\lambda^{-s}\right]\;.
\ee
Now, since $C_\lambda$ can be viewed as a subspace of $\sym^n(A)$, its dimension cannot exceed that of $\sym^n(A)$ which itself is bounded by $(n+1)^{|A|}$. We therefore conclude that
\ba
\sum_xp_xq_x^{-s}&\leq(n+1)^{s|A|}\sum_{\lambda\in\irr(\mS_n)}|B_\lambda|\tr\left[\rho_\lambda\sigma_\lambda^{-s}\right]=(n+1)^{s|A|}\tr\left[\rho_n\sigma_n^{-s}\right]\;.
\ea
Together with~\eqref{gg2}, this gives the second term on the r.h.s. of~\eqref{10139}.
\end{proof} \vspace{0.2cm}

\vspace{0.2cm}

The next lemma shows that the eigenvalues of the output from $n$ use of a positive definite channel $\cN > 0$ (i.e., its Choi matrix is a positive definite operator) are uniformly bounded by an exponential factor.

\begin{lemma}\label{lem: infinity norm bounded}
Let $\mN\in\cptp(A\to B)$ and $\mN>0$. Then, there exists $b\in(0,1)$ such that for any $n\in\mbb{N}$
\be\label{maxin}
\max_{\rho\in\md(R^nA^n)}\left\|\mN^{\otimes n}\left(\rho_{R^nA^n}\right)\right\|_\infty\leq b^n\;.
\ee
\end{lemma}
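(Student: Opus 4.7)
The plan is to exploit the positive-definiteness of $\mN$ by extracting from it a piece proportional to the completely randomizing (replacer) channel $\mR_{A\to B}(\rho) = I_B/|B|$. Since $\mN > 0$, its Choi matrix satisfies $J_\mN \geq \lambda I_{AB}$ for some $\lambda > 0$ (namely $\lambda = \lambda_{\min}(J_\mN)$). Because $J_\mR = I_{AB}/|B|$ and because trace preservation of $\mN$ forces $\lambda|B| \leq 1$, subtracting $\lambda I_{AB}$ from $J_\mN$ and renormalizing yields a convex decomposition
\[
\mN = q\, \mR + (1-q)\, \tilde{\mN}, \qquad q := \lambda|B| \in (0,1],
\]
where $\tilde{\mN} \in \cptp(A\to B)$.

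Next I would expand the tensor power:
\[
\mN^{\otimes n} = \sum_{S \subseteq [n]} q^{|S|}(1-q)^{n-|S|}\, \mR^{\otimes S} \otimes \tilde{\mN}^{\otimes S^c},
\]
where the factors act on the respective copies. For any fixed $S$ and any input state $\rho_{R^nA^n}$, applying $\mR$ on the positions in $S$ replaces the output subsystem $B_S$ by $I_{B_S}/|B|^{|S|}$ while tracing out $A_S$. Hence the corresponding term equals $\tau'_{R^n B_{S^c}} \otimes I_{B_S}/|B|^{|S|}$ for some state $\tau'$, whose operator norm is bounded above by $1/|B|^{|S|}$ since $\|\tau'\|_\infty \leq 1$.

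Combining these bounds via the triangle inequality and the binomial theorem yields
\[
\bigl\|\mN^{\otimes n}(\rho_{R^nA^n})\bigr\|_\infty \leq \sum_{k=0}^n \binom{n}{k} q^k (1-q)^{n-k}\, |B|^{-k} = \Bigl(1 - q + \tfrac{q}{|B|}\Bigr)^{\!n} = \bigl(1 - \lambda(|B|-1)\bigr)^n,
\]
so the conclusion holds with $b := 1-\lambda(|B|-1)$. Since $\lambda > 0$ and (in any nontrivial case) $|B| \geq 2$, this $b$ lies in $(0,1)$, and the bound is uniform in $\rho$ as required.

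There is no serious obstacle beyond setting up the decomposition; the delicate point is simply the observation that $\mR$ acting on any copy enforces a maximally mixed factor on that $B_i$, which is what produces the factor $1/|B|$ per randomized slot and drives the exponential decay. If one wanted a sharper constant one could replace $\mR$ by the channel produced from $J_\mN$'s smallest eigenspace, but the crude decomposition above already suffices.
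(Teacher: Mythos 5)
Your proof is correct and follows essentially the same route as the paper's: decompose $\mN$ via positive-definiteness of its Choi matrix into a convex combination of a replacer channel and a residual CPTP map, expand $\mN^{\otimes n}$ binomially, and bound each term by the operator norm of the replaced tensor factors. The only cosmetic difference is that you fix the replacer output to be the maximally mixed state (giving $b=1-\lambda(|B|-1)$), whereas the paper allows any $\tau$ with $\|\tau\|_\infty<1$ and obtains $b=1-\eps+\eps\|\tau\|_\infty$; both arguments share the same implicit (and harmless) assumption that $|B|\geq 2$.
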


\begin{proof}
Since $\mN>0$ we have its Choi matrix $J_{\cN} > 0$. Then there exists $\tau\in\md(B)$ with $\|\tau\|_\infty < 1$ (e.g. the maximally mixed state) and its associated replacer channel $\cR_\tau$ such that $tJ_{\cN} > J_{\cR_\tau}$ for some $t \in (0,\infty)$. Equivalently, we have $t\cN > \cR_\tau$. Set $\eps\eqdef 1/t$ and then $\mM\eqdef(\mN-\eps\mR_\tau)/(1-\eps)>0$; in particular, $\mM\in\cptp(A\to B)$ and $\mN=(1-\eps)\mM+\eps\mR_\tau$. Observe that
\be
\mN^{\otimes n}=\sum_{k=0}^n{n\choose k}(1-\eps)^k\eps^{n-k}\mF_{n,k}
\ee
where $\mF_{n,k}\in\cptp(A^n\to B^n)$ is a uniform convex combination of ${n\choose k}$ channels all having the form $\mM^{\otimes k}\otimes\mR_{\tau}^{\otimes n-k}$ up to permutations of the $n$ channels. Now, observe that
\ba
\left\|\mM^{\otimes k}\otimes\mR_\tau^{\otimes n-k}\left(\rho_{R^nA^n}\right)\right\|_\infty
=\left\|\mM^{\otimes k}\left(\rho_{R^nA^k}\right)\otimes\tau^{\otimes n-k}\right\|_\infty
\leq \left\|\tau^{\otimes n-k}\right\|_\infty=\|\tau\|_\infty^{n-k}\;.
\ea
Note that the order that $\mN$ and $\mR_\tau$ appear in the equation above does not effect this upper bound. Therefore, since $\mF_{n,k}$ is a convex combination of such channels we conclude that also
\be
\left\|\mF_{n,k}\left(\rho_{R^nA^n}\right)\right\|_\infty\leq\|\tau\|_\infty^{n-k}\;.
\ee
Hence, for any $\rho\in\md(R^nA^n)$
\ba
\left\|\mN^{\otimes n}\left(\rho_{R^nA^n}\right)\right\|_\infty&\leq\sum_{k=0}^n{n\choose k}(1-\eps)^k\eps^{n-k}\left\|\mF_{n,k}\left(\rho^{R^nA^n}\right)\right\|_\infty\\
&\leq \sum_{k=0}^n{n\choose k}(1-\eps)^k\eps^{n-k}\|\tau\|_\infty^{n-k}\\
&=\Big(1-\eps+\|\tau\|_{\infty}\eps\Big)^n\;.
\ea
The proof is completed by taking $b\eqdef1-\eps+\|\tau\|_{\infty}\eps$ which is clearly in $(0,1)$.
\end{proof} \vspace{0.2cm}

\vspace{0.2cm}

The next lemma shows that by utilizing the permutation symmetry of tensor product channels we can restrict the optimal input states in the discrimination strategies to be symmetric states. This reduces the problem from the most general form to a particular one that can be tackled more easily. A general result is given in~\cite[Proposition II.4]{leditzky2018approaches}. Here, we give an alternative proof for the hypothesis testing relative entropy.

\begin{lemma}\label{lem:restrict to symmetric states}
Let $\cN,\cM \in \cptp(A\to B)$. For any $n\in\mbb{N}$ there exists a pure state $|\varphi\ra\in\sym^n(\tilde RA)$ such that
\be
\max_{\psi\in\md(R^nA^n)}D_H^\ve\left(\mN^{\otimes n}(\psi_{R^nA^n})\big\|\mM^{\otimes n}(\psi_{R^nA^n})\right) = D_H^\ve\left(\mN^{\otimes n}(\varphi_{\tilde R^nA^n})\big\|\mM^{\otimes n}(\varphi_{\tilde R^nA^n})\right).
\ee
\end{lemma}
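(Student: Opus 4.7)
\medskip
\noindent
\textbf{Proof proposal.}  The plan is to exploit the permutation covariance of the tensor product channels $\cN^{\otimes n}$ and $\cM^{\otimes n}$ by coherently symmetrizing any optimal input. First I would use the standard purification plus data-processing argument to restrict the supremum on the left-hand side to pure states on a sufficiently large reference, so that there is a pure state $|\psi\rangle \in R^n \otimes A^n$ attaining the maximum. The goal is then to build from $|\psi\rangle$ a pure state $|\varphi\rangle$ on $\tilde R^n \otimes A^n$, with $\tilde R = R \otimes P$ for an auxiliary $n$-dimensional register $P$, which lies in $\sym^n(\tilde R A)$ and which satisfies $D_H^\ve(\cN^{\otimes n}(\varphi) \| \cM^{\otimes n}(\varphi)) \geq D_H^\ve(\cN^{\otimes n}(\psi) \| \cM^{\otimes n}(\psi))$. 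Since $|\varphi\rangle$ is itself a legitimate input for the LHS supremum (the reference in the definition of channel divergence is arbitrary), the reverse inequality is automatic, giving equality.

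Concretely, I would take
\[
|\varphi\rangle \eqdef \frac{1}{\sqrt{n!}} \sum_{\pi \in \mS_n} \bigl(U_\pi^{R^n} \otimes U_\pi^{A^n}\bigr) |\psi\rangle_{R^n A^n} \otimes |\pi^{-1}\rangle_{P^n},
\]
where $|\pi^{-1}\rangle_{P^n} \eqdef |\pi^{-1}(1)\rangle_{P_1} \cdots |\pi^{-1}(n)\rangle_{P_n}$ and $U_\pi$ denotes the permutation unitary on the indicated $n$ systems. The encoding $\pi \mapsto |\pi^{-1}\rangle$ is chosen so that the action of $U_{\pi_0}$ on $P^n$ combines with that on $R^n A^n$ to reindex $\pi \to \pi_0 \pi$ under the sum, thereby leaving $|\varphi\rangle$ invariant under joint permutations of the $n$ copies of $\tilde R A = RPA$ and placing it inside $\sym^n(\tilde R A)$. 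To prove the key inequality I would apply the completely dephasing map $\Delta$ on $P^n$ in the $\{|\pi^{-1}\rangle\}$ basis; by permutation covariance of the channels, the resulting classical-quantum states are $\tfrac{1}{n!}\sum_\pi U_\pi \sigma U_\pi^\dagger \otimes |\pi^{-1}\rangle\!\langle \pi^{-1}|$ and the analogous $\tau$-version, with $\sigma \eqdef (\id \otimes \cN^{\otimes n})(|\psi\rangle\!\langle\psi|)$ and $\tau \eqdef (\id \otimes \cM^{\otimes n})(|\psi\rangle\!\langle\psi|)$. Data processing for $D_H^\ve$ under $\Delta$ combined with the block-diagonal test $\sum_\pi (U_\pi \Pi_0 U_\pi^\dagger) \otimes |\pi^{-1}\rangle\!\langle \pi^{-1}|$, built from an optimal test $\Pi_0$ distinguishing $\sigma$ from $\tau$, then yields $D_H^\ve(\cN^{\otimes n}(\varphi)\|\cM^{\otimes n}(\varphi)) \geq D_H^\ve(\sigma\|\tau) = D_H^\ve(\cN^{\otimes n}(\psi)\|\cM^{\otimes n}(\psi))$, as required.

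The step I expect to be the main obstacle is the bookkeeping of the permutation action: the encoding of $\mS_n$ into the $P^n$ register must be chosen so that the joint permutation of the $n$ copies of $\tilde R A$ acts as a simple left multiplication on the index $\pi$ in the sum defining $|\varphi\rangle$. A naive choice such as $|\pi\rangle$ in place of $|\pi^{-1}\rangle$ produces a right action on $\mS_n$ and breaks the symmetry, so some care is required. Once the encoding is set up correctly, permutation covariance of $\cN^{\otimes n}$ and $\cM^{\otimes n}$ together with data processing for $D_H^\ve$ make the remaining steps essentially mechanical.
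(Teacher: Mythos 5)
Your proposal is correct, and it reaches the paper's conclusion by a genuinely different construction of the symmetric state. The paper symmetrizes \emph{incoherently}: it forms the classical--quantum state $\omega_{XR^nA^n}=\frac{1}{n!}\sum_{\pi\in\mS_n}\proj{\pi}_X\otimes P_\pi\psi_{R^nA^n} P_\pi^*$, observes that its marginal on $R^nA^n$ is permutation invariant, invokes the \emph{existence} of a symmetric purification $\varphi$ of that marginal (Renner's Lemma 4.2.2), and uses the isometric relation between purifications to exhibit $\omega_{XR^nA^n}$ as the image of $\varphi$ under a channel acting only on the purifying system; the value is then transferred through the variational (SDP) form of $D_H^\ve$ and monotonicity of $\tr(\cdot)_+$. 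You instead build the symmetric pure state \emph{explicitly} as a coherent superposition over $\mS_n$ with a distributed which-permutation flag, verify membership in $\sym^n(\tilde RA)$ directly (your care with $\ket{\pi^{-1}}$ versus $\ket{\pi}$ is precisely what turns the joint permutation into a left multiplication $\pi\mapsto\pi_0\pi$ and makes the sum invariant), and transfer the value using only the data-processing inequality for $D_H^\ve$ together with an explicit block-diagonal test. Your route avoids both the symmetric-purification lemma and the variational formula, and since it uses nothing about $D_H^\ve$ beyond data processing and its behaviour on flagged direct sums, it visibly generalizes to any divergence with the direct-sum property, in line with the paper's closing remark to this lemma; the price is a larger per-copy reference $\tilde R=R\otimes\CC^{n}$ in place of the paper's $\tilde R=CR$ with $C\cong RA$ (both are fine, since the lemma permits an enlarged reference). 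Two small points to make explicit in a final write-up: (i) the reverse inequality needs the observation that enlarging the reference does not increase the supremum, which holds because the channel divergence is already saturated with a reference isomorphic to the input; (ii) in your displayed cq states the conjugation should read $(U_\pi^{R^n}\otimes U_\pi^{B^n})\sigma(U_\pi^{R^n}\otimes U_\pi^{B^n})^\dagger$, i.e., permutation covariance of $\cN^{\otimes n}$ pushes the input permutation to one on the \emph{output} systems $B^n$.
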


\begin{proof}
First recall a variational expression of the hypothesis testing relative entropy~\cite[Eq.(2)]{buscemi2017quantum}
\begin{align}\label{eq: variational of DH}
    D_H^\ve(\rho\|\sigma) = -\log \max_{t \geq 0} \big\{(1-\ve)t - \tr(t\rho - \sigma)_+\big\}\;.
\end{align}
Therefore, we have
\ba\label{eq: DH channel variational}
{D_{H}^\eps\left(\mN^{\otimes n}\big\|\mM^{\otimes n}\right)}=-\log \max_{t\geq 0} \left\{(1-\eps)t-\tr\big(t\mN^{\otimes n}(\psi_{R^nA^n})-\mM^{\otimes n}(\psi_{R^nA^n})\big)_+\right\}\;,
\ea
for some state $\psi_{R^nA^n}\in\md(R^nA^n)$. Let
\be
\omega_{XR^nA^n}=\frac1{n!}\sum_{\pi\in\mS_n}|\pi\lr\pi|_X\otimes P_\pi\psi_{R^nA^n} P_\pi^*
\ee
where $X$ is a `flag' system of dimension $|X|=n!$. By construction, the marginal state $\omega_{R^nA^n}$ is symmetric under permutations (i.e. has support on $\sym^n(RA)$), so there exists a symmetric purification of $\omega_{R^nA^n}$ which we denote by $\varphi_{C^nR^nA^n}$, where $C\cong RA$~\cite[Lemma 4.2.2]{renner2005security}. Let $\omega_{DXR^nA^n}$ be a purification of $\omega_{XR^nA^n}$ and thus also a purification of $\omega_{R^nA^n}$. Since all purifications of a density matrix are related via isometries, there exists an isometry $V_{C^n \to DX}$ such that
\begin{align}
\omega_{DXR^nA^n} = (V_{C^n \to DX}) \varphi_{C^nR^nA^n} (V_{C^n \to DX})^\dagger.
\end{align}
Taking a partial trace of the system $D$ on both sides gives
\be
\omega_{XR^nA^n}=\mE_{C^n\to X}\left(\varphi_{C^nR^nA^n}\right)\;,
\ee
where $\mE(\cdot) = \tr_D V (\cdot) V^\dagger \in \cptp(C^n \to X)$. Let $\tR\eqdef CR$, then $|\varphi_{\tR^nA^n}\ra\in\sym^n(\tR A)$ and 
\begin{align}
& \tr\Big(t\mN^{\otimes n}\big(\varphi_{\tR^nA^n}\big)-\mM^{\otimes n}\big(\varphi_{\tR^nA^n}\big)\Big)_+ \notag\\
 & \geq \tr\Big(t\mN^{\otimes n}\big(\omega_{XR^nA^n}\big)-\mM^{\otimes n}\big(\omega_{XR^nA^n}\big)\Big)_+\\
 & = \frac1{n!}\sum_{\pi\in\mS_n}\tr\big(t\mN^{\otimes n}\left(P_\pi\psi_{R^nA^n}P_\pi^*\right)-\mM^{\otimes n}(P_\pi\psi_{R^nA^n}P_\pi^*)\Big)_+\\
 & = \tr\big(t\mN^{\otimes n}(\psi_{R^nA^n})-\mM^{\otimes n}(\psi_{R^nA^n})\big)_+
\end{align}
where the first inequality follows from the data processing inequality of $\tr(\cdot)_+$~\footnote{This can be easily seen from the equation $\tr(X)_+ = (\|X\|_1 + \tr X)/2$ and the data processing inequality of trace norm.}, the first equality follows from the block diagonal structure of $\mN^{\otimes n}\big(\omega_{XR^nA^n}\big)-t\mM^{\otimes n}\big(\omega_{XR^nA^n}\big)$, the second equality follows because $\tr(\cdot)_+$ is unitary invariant and $\cN^{\ox n}, \cM^{\ox n}$ commute with permutations. Together with~\eqref{eq: DH channel variational}, we can conclude that 
\ba
{D_{H}^\eps\left(\mN^{\otimes n}\big\|\mM^{\otimes n}\right)}&\leq-\log\max_{t\in\mbb{R}} \left\{(1-\eps)t-\tr\big(t\mN^{\otimes n}(\varphi_{\tR^nA^n})-\mM^{\otimes n}(\varphi_{\tR^nA^n})\big)_+\right\}\\
&={D_{H}^\eps\left(\mN^{\otimes n}\left(\varphi_{\tR^nA^n}\right)\big\|\mM^{\otimes n}\left(\varphi_{\tR^nA^n}\right)\right)}.
\ea
This completes the proof.
\end{proof} \vspace{0.2cm}

\begin{remark}
We say that a quantum divergence, $\boldD$, satisfies \emph{the direct sum property} if there exists a one-to-one function $f:\mbb{R}_+\to\mbb{R}_+$ such that for any pair of cq-states $\rho,\sigma\in\md(XA)$ of the form $\rho_{XA}\eqdef\sum_xp_x|x\lr x|^X\otimes\rho_x^A$ and $\sigma^{XA}\eqdef\sum_xp_x|x\lr x|^X\otimes\sigma_x^A$
where $\{p_x\}$ is a probability distribution and $\rho_x,\sigma_x\in\md(A)$,
we have $f^{-1}\left(\boldD\left(\rho_{XA}\big\|\sigma_{XA}\right)\right)=\sum_xp_xf^{-1}\left(\boldD\left(\rho_x^A\big\|\sigma_x^A\right)\right)$.
The direct sum property is essentially equivalent to the \emph{general mean property} used by R\'enyi and M\"uller-Lennert et al. for its generalization to the quantum case and holds for almost all the quantum divergences studied in the literature. Following a similar proof, we can show that Lemma~\ref{lem:restrict to symmetric states} holds for any quantum divergence with the direct sum property.
\end{remark}

\end{document}